\pgfplotsset{compat=newest}
\def\BibTeX{{\rm B\kern-.05em{\sc i\kern-.025em b}\kern-.08em
    T\kern-.1667em\lower.7ex\hbox{E}\kern-.125emX}}
\DeclareMathOperator{\lcm}{lcm}
\renewcommand{\mod}{\mathbin{\%}}
\newcommand{\cmod}{\mathbin{\overline{\%}}}
\newcommand{\s}[1]{\left\langle #1 \vphantom{\frac{1}{2}} \right\rangle}
\newcommand{\ts}[3]{\bigcup_{#2}^{#3} \left\{ \vphantom{M^{M^M}_M} #1 \right\}  }
\newcommand{\seq}[1]{(#1_k)}
\renewcommand{\r}[1]{\mathfrak{#1}}
\renewcommand{\a}{\alpha}
\renewcommand{\aa}{\dot{\alpha}}
\renewcommand{\b}{\beta}
\newcommand{\n}{\eta}
\newcommand{\E}[1]{   \mathbb{E}  \mathopen{}\left[   #1 \right] \mathclose{}  }
\newcommand{\hE}[1]{\hat{\mathbb{E}} \mathopen{}\left[  #1 \right]  \mathclose{} }
\renewcommand{\t}[1]{\tilde{#1}}
\newcommand{\floor}[1]{\left\lfloor #1 \right\rfloor}
\newcommand{\ceil}[1]{\left\lceil #1 \right\rceil}
\newtheorem{lemma}{Lemma}
\newtheorem{theorem}{Theorem}
\newtheorem{corollary}{Corollary}
\renewcommand{\figurename}{Fig.}
\newcommand{\figref}[1]{\figurename~\ref{#1}}
\newcommand*{\centerfloat}{%
  \parindent \z@
  \leftskip \z@ \@plus 1fil \@minus \textwidth
  \rightskip\leftskip
  \parfillskip \z@skip}
\begin{document}

\title{Age-of-Information in Clocked Networks
\thanks{
RIchard Schöffauer is supported by the DFG (SPP 1914). Gerhard Wunder is supported by both the DFG (SPP 1914) and the BMBF (6G-RIC).}
}

\author{\IEEEauthorblockN{RIchard Schöffauer}
\IEEEauthorblockA{\textit{Dept. of Mathematics and Computer Science} \\
\textit{Freie Universität Berlin}\\
Berlin, Germany \\
richard.schoeffauer@fu-berlin.de}
\and
\IEEEauthorblockN{Gerhard Wunder}
\IEEEauthorblockA{\textit{Dept. of Mathematics and Computer Science} \\
\textit{Freie Universität Berlin}\\
Berlin, Germany \\
gerhard.wunder@fu-berlin.de}
}

\maketitle

\begin{abstract}
We derive key features of the Age-of-Information distribution in a system whose activities are strictly limited to periodic instances on a global time grid. In particular, one agent periodically generates updates while the other agent periodically uses the most recently received of those updates. Likewise, transmission of those updates over a network can only occur periodically. All periods may differ. We derive results for two different models: a basic one in which the mathematical problems can be handled directly and an extended model which, among others, can also account for stochastic transmission failure, making the results applicable to instances with wireless communication.
For both models, a suitable approximation for the expected Age-of-Information and an upper bound for its largest occurring value are developed. For the extended model (which is the more relevant one from a practical standpoint) we also present numerical results for the distribution of the approximation error for numerous parameter choices. 
\end{abstract}

\begin{IEEEkeywords}
Age-of-Information, AoI, Periodicity, Determinism, IoT
\end{IEEEkeywords}

\section{Introduction \& Related Research}

Timely delivery of data is an important feature of communication systems. To quantify this feature, the Communication-Delay (ComDelay) has long been the predominant metric. In recent years however, especially in the context of machine-to-machine communication (which includes many communication scenarios from IoT), the Age-of-Information (AoI) metric has gained considerable attention. While ComDelay captures the elapsed time between transmission and reception (and thus only considers the communication infrastructure), AoI measures how "old" the current information at the receiver really is. To disclose the difference, imagine a communication line with a 1 second ComDelay and assume that a packet of information is send over that line \textit{only} every full hour. The ComDelay is not influenced by this usage and remains 1 second; however the AoI at the receiver is about 30 minutes on average. Every reception, it starts from 1 second and linearly grows with time, peaking at 1 hour and 1 second just before the next reception.

So far, the AoI metric has been studied in a variety of scenarios, an overview over which is given a few lines further down. However, all these scenarios are concerned with stochastic systems: data generation is irregular, intermediate processing takes a random amount of time, transmission is defective. While there certainly is a lot of motivation for such scenarios, it is surprising that rather simple and deterministic set ups have been ignored until now. In practice, many autonomous systems do engage in their tasks in a very regular fashion: measuring system states, processing new outputs, applying corrective actions, they all take place in a predefined amount of time and are repeated thereafter periodically. Even a communication network itself usually schedules transmissions in predefined resource blocks over a time grid.

\begin{figure}
    \centering
    \footnotesize
\begingroup%
  \makeatletter%
  \providecommand\color[2][]{%
    \errmessage{(Inkscape) Color is used for the text in Inkscape, but the package 'color.sty' is not loaded}%
    \renewcommand\color[2][]{}%
  }%
  \providecommand\transparent[1]{%
    \errmessage{(Inkscape) Transparency is used (non-zero) for the text in Inkscape, but the package 'transparent.sty' is not loaded}%
    \renewcommand\transparent[1]{}%
  }%
  \providecommand\rotatebox[2]{#2}%
  \newcommand*\fsize{\dimexpr\f@size pt\relax}%
  \newcommand*\lineheight[1]{\fontsize{\fsize}{#1\fsize}\selectfont}%
  \ifx\svgwidth\undefined%
    \setlength{\unitlength}{230.52467406bp}%
    \ifx\svgscale\undefined%
      \relax%
    \else%
      \setlength{\unitlength}{\unitlength * \real{\svgscale}}%
    \fi%
  \else%
    \setlength{\unitlength}{\svgwidth}%
  \fi%
  \global\let\svgwidth\undefined%
  \global\let\svgscale\undefined%
  \makeatother%
  \begin{picture}(1,0.91447813)%
    \lineheight{1}%
    \setlength\tabcolsep{0pt}%
    \put(0,0){\includegraphics[width=\unitlength,page=1]{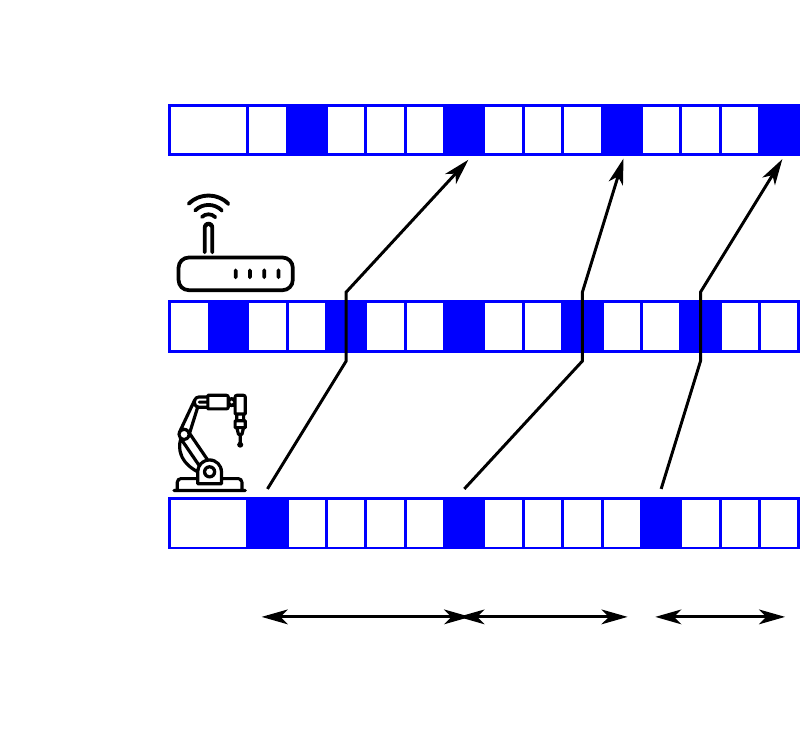}}%
    \put(0.62926068,0.00852127){\makebox(0,0)[t]{\lineheight{1.25}\smash{\begin{tabular}[t]{c}Resulting AoI\end{tabular}}}}%
    \put(0,0){\includegraphics[width=\unitlength,page=2]{set_up.pdf}}%
    \put(0.08821482,0.74631101){\makebox(0,0)[t]{\lineheight{1.25}\smash{\begin{tabular}[t]{c}Agent A\end{tabular}}}}%
    \put(0.08821482,0.5003811){\makebox(0,0)[t]{\lineheight{1.25}\smash{\begin{tabular}[t]{c}Network N\end{tabular}}}}%
    \put(0.08821482,0.25445118){\makebox(0,0)[t]{\lineheight{1.25}\smash{\begin{tabular}[t]{c}Agent B\end{tabular}}}}%
    \put(0,0){\includegraphics[width=\unitlength,page=3]{set_up.pdf}}%
    \put(0.23577277,0.00852127){\makebox(0,0)[t]{\lineheight{1.25}\smash{\begin{tabular}[t]{c}Time Grid\end{tabular}}}}%
    \put(0,0){\includegraphics[width=\unitlength,page=4]{set_up.pdf}}%
  \end{picture}%
\endgroup%

    \caption{General set-up: two agents communicating over a network. Each unit features its independent working cycle (distance between blue slots) on a common time grid. Agent B periodically generates status-updates which are send over a network N to agent A. Agent A can only make us of this new information at the start of its next processing cycle.}
    \label{fig::system_model}
\end{figure}

Therefore, in this paper, we investigate an elementary system that consists of two subsystems with periodic workflow and a communication network (likewise with periodical workflow) over which data is exchanged. It turns out that even without any stochastic effects taken into account, the non-linear behavior of the AoI makes it difficult to yield analytical expressions for e.g. the average AoI or the largest occurring AoI in the system. Our \textbf{contribution} consists of deriving expressions for exactly such key performance parameters in connected subsystems with periodic workflow. Along the way, we state structural results regarding the AoI process over time for those systems.

Regarding \textbf{related research}, note that to the best of our knowledge, we are the first to investigate the AoI in the context of a deterministic framework. Therefore the following references are only of limited use in order to contextualize our contribution.
Investigation of the AoI metric started with \cite{Kaul2011, Kaul2012} where the average AoI was optimized over the update generation rate for elementary queueing systems such as M/M/1, M/D/1, and D/M/1.
Naturally, this sparked further research exploring various other queueing systems and other AoI features like the peak AoI.
The benefit of having multiple servers (e.g. M/M/2 queue) was investigated in \cite{Kam2016,Costa2014,Yates2018a,Bedewy2016}.
The effect of packet deadlines, after which packets are deleted, freeing up network resources in the process, was demonstrated in \cite{Kam2016a,Kam2016b,Inoue2018}.
Considering that update generation and or transmission is resource consuming can be modeled by energy harvesting sources which, in the context of AoI, was investigated in \cite{Yates2015,Bacinoglu2017,Arafa2017,Wu2018,Arafa2020}. Direct power constraints on the other hand were considered in \cite{Tang2019}.
AoI in the context of wireless networks, especially broadcast networks, where link transmission is error prone and interference constraints limit the set of admissible scheduling policies was investigated in \cite{Kadota2018,Kadota2019, Talak2020, Talak2018d}. Here, the focus lies on finding (near-) optimal policies to minimize the AoI for the entire network. The special case of stochastic policies that trigger transmissions over individual links based on Bernoulli processes (ALOHA-like policies) was researched in \cite{Talak2018a, Yates2017}.
AoI in multi hop wireless networks was dealt with in \cite{Talak2018b, Farazi2019a, Buyukates2019},
and for the special case of gossip networks in \cite{Yates2021,Chaintreau2009,Selen2013}.

We close the introduction with some notes on the \textbf{notation}:
\begin{itemize}[leftmargin=*]
    \item For sandwiched values we use the following shorthand:
    \begin{equation}
        z = x \pm y
        \qquad \Longleftrightarrow \qquad
        x-y \leq z \leq x+y
    \end{equation}
    \item All sets in this paper are multisets, i.e. a single element might have more than one occurrence. (Therefore a multiset might represent a distribution of values.)
    \item Sets of evenly distributed natural numbers are denoted as the triple $\langle$start, step-size, number of steps$\rangle$:
    \begin{equation}
    \begin{gathered}
        \s{y,  x ,X} := \Big\{ y, \ y+ x, \ y+2x, \dots y+ (X-1) x \Big\}
    \end{gathered}
    \end{equation}
    \item The modulo-operator will be abbreviated by $\mod$, its complement by $\cmod$:
    \begin{equation}
        \label{eq::mod_to_ceil}
        x \mod y := x - \left\lfloor \frac{x}{y} \right\rfloor y
        \qquad \qquad
        x \cmod y := \left\lceil \frac{x}{y} \right\rceil y - x
    \end{equation}
    (While $x \mod y$ is the distance between $x$ and the closest multiple of $y$ that is still smaller than $x$, $x \cmod y$ is the distance between $x$ and the closest multiple of $y$ that is still larger than $x$. It holds that $(-x) \mod y = x \cmod y$.)
\end{itemize}

\section{Basic System Model}

\begin{figure}
    \centering
    \footnotesize
    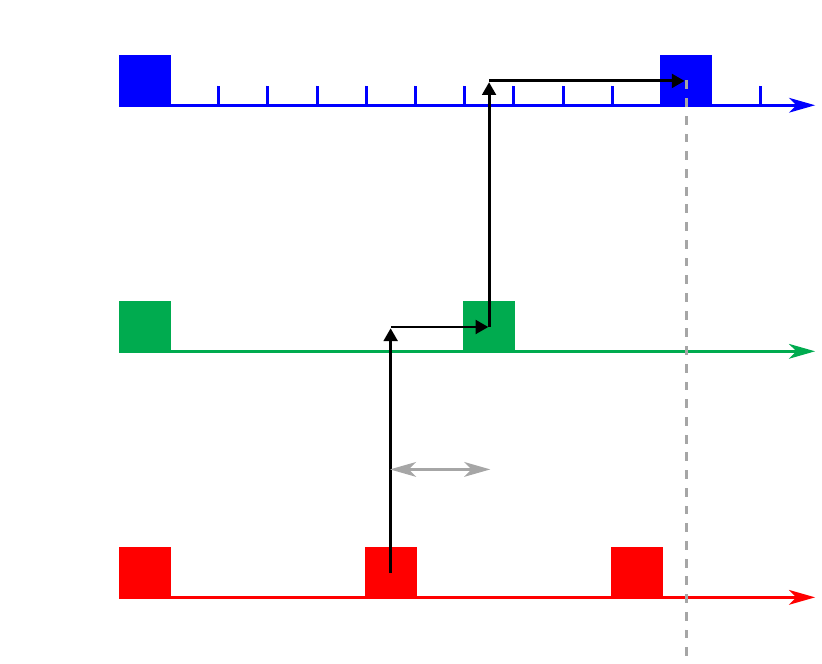
    \caption{Basic model. All periods start at $t=0$ and everything else happens instantly.}
    \label{fig::general_system}
\end{figure}

Our elementary system consists of two agents (A and B) and a communication network (N). Time is slotted. Agent A periodically engages in a processing cycle for which he uses the latest status-update he received from agent B. While agent A is processing, he may receive new status-updates from agent B but is not able to make use of them until the start of the next processing cycle. Only the newest status-update from B at A is needed for processing and thus only the last received status-update is buffered. Still, even this status-update is, in general, several time steps old when it is used at the start of the next processing cycle. This age is called the Age-of-Information, or short AoI.
Agent B generates its status-updates periodically. Likewise, the network N can transmit updates from B to A only at periodically distributed time steps (e.g. because other users require communication resources as well). 

In particular: agent A starts a new processing cycle every $A'$-th time step, agent B generates an update every $B'$-th time step, network N transmits the latest update from agent B every $N'$-th time step.
The relations between $A',B',N'$ (i.e. their common divisors) are of great importance to the problem and are defined through the following compositions:
\begin{equation}
    \label{eq::compositions}
    \begin{aligned}
        A' &= A \cdot b \cdot n
        \\
        B' &= B \cdot b \cdot a
        \\
        N' &= N \cdot n \cdot a
    \end{aligned}
    \qquad \text{such that} \qquad
    \begin{aligned}
        a &= \gcd(B',N')
        \\
        b &= \gcd(A',B')
        \\
        n &= \gcd(A',N')
    \end{aligned}
\end{equation}
All quantities are $\in \mathbb{N}$ (natural numbers). We point out that we do not consider a divisor (except $1$) that divides all three periods $A',B',N'$ since such a divisor can simply be accounted for by scaling the entire time axis.

From here, we develop two different models. The first one is labeled the \textbf{"basic" model}:
\begin{itemize}[leftmargin=*]
    \item Update generation and transmission is instant.
    \item All periods start at $t=0$.
    \item Transmissions are always successful.
\end{itemize}
\figref{fig::system_model} illustrates the setting. (Notably, the resulting system model already exhibits the fundamental mathematical challenges we tackle in this paper, such that most results can be readily extended to the second model).

We are interested in the AoI process $\seq{\a}_{k\in \mathbb{N}}$ (from now on denoted just as $\seq{\a}$) where $\alpha_k$ is the AoI of the latest received update (from B at A) at the beginning of agent A's processing cycle $k$. In case of the basic model, this process can readily be deduced from \figref{fig::system_model} as 
\begin{equation}
\label{eq::basis}
\begin{aligned}
    \alpha_k &=
    kA' \mod N' + \left( kA'  - kA' \mod N' \right) \mod B'
    \\ &=
    kA' \mod N' + \left( kA' \mod B'  - kA' \mod N' \right) \mod B'
\end{aligned}
\end{equation}
Even though $\seq{\a}$ is deterministic, we will interpret $\seq{\a}$ as a stochastic process since its behavior can quickly become quite complex.



As a direct consequence of this kind of modeling, in time step $t=0$, agent B generates a status-update which network N immediately transmits to agent A and which can then be used in agent A's first processing cycle. Hence, in time slot $t=0$ (and periodically thereafter) the AoI is zero, which already motivates a second system model.

\section{Extended System Model}

Intuitively, the AoI at agent A should never truly be zero for several reasons, e.g. due to the time it takes for the transmission to reach agent A or due to the time it takes agent B to generate the status-update. Therefore we will extend our results to a system model which accounts for such delays and also introduces some other aspects that are quite common in this field of research. We refer to this new model as the \textbf{"extended" model}:
\begin{itemize}[leftmargin=*]
    \item Update generation and transmission each take one time step.
    \item Relative to time step $t=0$, the first periods of agent B and network N are delayed by $\Delta_B$ and $\Delta_N$ time steps.
    \item Transmissions succeed only with probability $p$.
\end{itemize}
For a visual representation see \figref{fig::system_extended}.

In this model we denote the AoI process with $\seq{\aa}$ while its (verbal) definition stays the same relative to the model. Compared to the basic model, the elapsed time since the last network transmission relative to the current time step $t=kA'$ (the beginning of agent A's $k$-th processing cycle) changes according to
\begin{gather}
    \label{eq::der_1}
    \hphantom{\rightsquigarrow} \quad kA' \mod N'
    \\
    \label{eq::der_2}
    \rightsquigarrow \quad (kA' - \Delta_N) \mod N'
    \\
    \label{eq::der_3}
    \rightsquigarrow \quad 1 + (kA' - \Delta_N - 1) \mod N'
    \\
    \label{eq::der_4}
    \rightsquigarrow \quad lN' + 1 + (kA' - \Delta_N - 1) \mod N'
\end{gather}

\begin{figure}
    \centering
    \footnotesize
    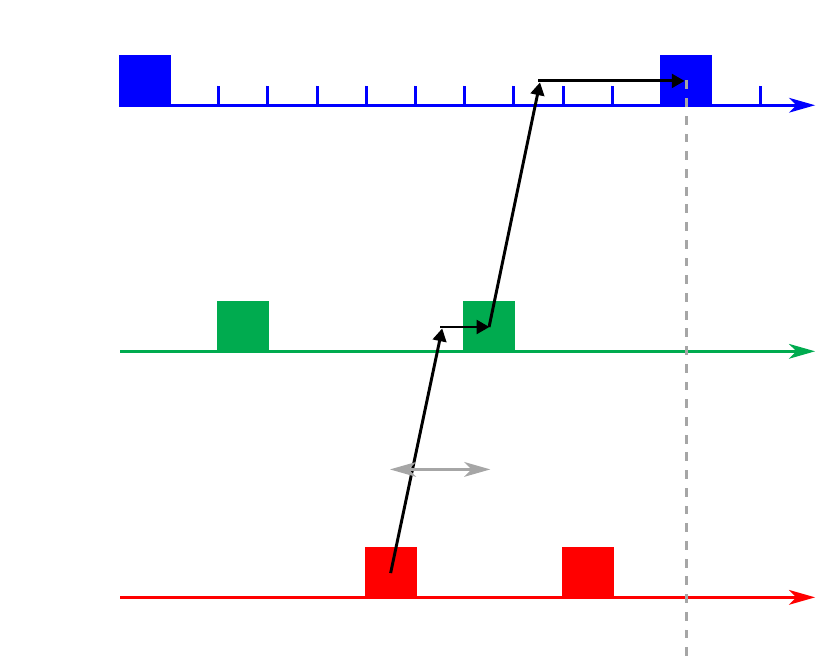
    \caption{Extended model. Start of initial periods are shifted, update generation and transmission takes up one time step.}
    \label{fig::system_extended}
\end{figure}

The first change follows from the shift of network N's period.
The second change follows from the transmission delay: \eqref{eq::der_2} and \eqref{eq::der_3} are identical if the value of \eqref{eq::der_2} is from the set $\{1,\dots N'-1\}$. For these values the transmission delay does not effect the outcome since there is enough time until agent A's processing cycle starts. However, if transmission takes place in the same time step as A's processing cycle starts (eq. \eqref{eq::der_2} yields $0$), then \textit{in the extended} model, agent A cannot make use of the new update yet and has to resort to the previous transmission which is $N'$ steps in the past (eq. \eqref{eq::der_3} yields $N'$).
The third change follows from unsuccessful transmissions. If the latest $l$ transmissions did not succeed, \eqref{eq::der_4} accounts for that by adding $lN$ time steps on top of \eqref{eq::der_3}. We will account for the probability later.

On agent B's side, the elapsed time since the last status-update from agent B that the network could use for transmission becomes
\begin{gather}
    \hphantom{\rightsquigarrow}\quad (kA' - \bigstar) \mod B'
    \\
    \rightsquigarrow \quad  (kA' - \bigstar - \Delta_B) \mod B'
    \\
    \label{eq::derr_3}
    \rightsquigarrow \quad 1 + (kA' - \bigstar - \Delta_B - 1) \mod B'
\end{gather}
where $\bigstar$ is the placeholder for the term \eqref{eq::der_4}, i.e. the elapsed time since the latest network transmission (see. \figref{fig::system_extended}). As before the changes are due to the shift of B's period and the update generation delay.

Simply summing \eqref{eq::der_4} and \eqref{eq::derr_3} seems to yields an evolution for $\seq{\aa}$. However, this evolution does not account for the fact that the number $l$ of failed transmissions possibly changes over time. Hence, such a sum only yields an evolution under the assumption that for every $kA'$-th time step, the last $l$ transmissions did fail while the $(l+1)$-th transmission did succeed. We denote the sequence following such an evolution with $\seq{\aa^{[l]}}$ and point out that it might be impossible to comply with the assumption:
\begin{equation}
\begin{aligned}
    \aa^{[l]}_k &= 2 + lN' + (kA'-\Delta_N-1) \mod N'
    \\
    & 
    \begin{aligned}
        \quad +
        \Big[
        (kA'-\Delta_B-2-lN') \mod B'&
        \\
        -
        (kA'-\Delta_N-1) \mod N'&
        \Big] \mod B'
    \end{aligned}
\end{aligned}
\end{equation}
Though this looks quite more intricate than \eqref{eq::basis}, we can analyze $\seq{\aa^{[l]}}$ the same way we analyzed the basic model. Note that the values $2$ and $1$ are due to the introduced delays in update generation and transmission. It is an easy exercise to use different delay values and carry those over to any of the presented results.

\section{Results for the Basic Model}
\label{sec::extended}

Treating $\seq{\a}$ as a stochastic process, the main goal of this paper is to find usable expressions for the expectation and some upper bound on its largest values.
As it turns out, these expressions can be derived, once the following structural result on the distribution of values of $\seq{\a}$ is established: 

\begin{minipage}[t]{0.925\columnwidth}
\colorlet{shadecolor}{orange!15}\begin{shaded}
\begin{theorem}
    \label{theo::main}
    In the basic model, the period of the process $\seq{\a}$ is $aBN$. And over $aBN$ consecutive elements, the AoI sequence $\seq{\alpha}$ generates the following distribution of values:
    \begin{equation}
        \label{eq::main_theo_set}
        \ts{\alpha_k}{k = 1}{aBN}
        = \bigcup_{\substack{ i = 0\dots a-1 \\ j = 0\dots N-1 }}
        \s{c_{ij},ab,B}
    \end{equation}
    with
    \begin{multline}
        \label{eq::cij_main}
           c_{ij} = 
        iA' \mod (ab)  + {}  \\
                \left\lceil
                    \frac{iA' \mod (an)-iA' \mod (ab) + jan}{ab}
                \right\rceil
                ab 
    \end{multline}
\end{theorem}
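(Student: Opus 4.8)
The plan is to first unpack the arithmetic hidden in the compositions \eqref{eq::compositions}. Since $n=\gcd(A',N')=\gcd(Abn,Nna)=n\gcd(Ab,Na)$ and likewise for $a,b$, I obtain the three coprimalities $\gcd(Ab,Na)=\gcd(An,Ba)=\gcd(Nn,Bb)=1$, which also force $A,B,N$ to be pairwise coprime. These identities are the workhorse throughout, because they turn each single-modulus sequence into a full sweep of residues. Using them I would rewrite \eqref{eq::basis} in separated form: with $q=\floor{kA'/N'}$ and $\gcd(N',B')=a$, $N'/a=Nn$, $B'/a=Bb$, the second summand becomes $(qN')\mod B' = a\,(qNn \mod Bb)$, so that $\alpha_k = u_k + a\,s_k$ where $u_k := kA'\mod N'$ is a multiple of $n$ in $[0,N')$ and $s_k := qNn \mod Bb$ is a residue modulo $Bb$. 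Note also $s_k=(k A'-u_k)/a \mod Bb$, which will be convenient later.

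For the period I would first check that $aBN\cdot A'$ is a common multiple of $N'$ and $B'$, so $aBN$ is a period of both $u_k$ and $s_k$, hence of $(\alpha_k)$. Minimality then follows by locating the zeros: $\alpha_k=0$ forces $u_k=0$ and $s_k=0$; now $u_k=0\Leftrightarrow Na\mid k$ by $\gcd(Ab,Na)=1$, and on the sub-grid $k=tNa$ one has $s_k = t\,(AbNn \mod Bb)\mod Bb$, which vanishes iff $B\mid t$ by $\gcd(ANn,B)=1$. Thus the zeros of $(\alpha_k)$ sit exactly at the multiples of $aBN$, so no smaller period is possible.

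The core is the distribution. I would slice a full period into $B$ blocks via $k=tNa+k'$ with $t\in\{0,\dots,B-1\}$, $k'\in\{0,\dots,Na-1\}$. Because $(k+Na)A'=kA'+Ab\,N'$, the residue $u_k=u_{k'}$ depends only on $k'$ while $q$ advances by exactly $Ab$ per block, giving $s_k=(s_0(k')+t g)\mod Bb$ with $g:=AbNn\mod Bb = b\,(ANn\mod B)$. Fixing $k'$ and letting $t$ run, $\gcd(ANn,B)=1$ makes $\{s_k\}$ an arithmetic progression of step $b$ and length $B$ sweeping all residues $\equiv s_0(k')\pmod b$; multiplying by $a$ and adding $u_{k'}$ turns the block into exactly one run $\s{c(k'),ab,B}$ with $c(k') = u_{k'} + a\big(s_0(k')\mod b\big)$. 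Since $s_0(k')=(k'A'-u_{k'})/a$, this $c(k')$ is precisely the unique value in $[u_{k'},u_{k'}+ab)$ that is congruent to $k'A'$ modulo $ab$, i.e.\ the smallest $y\ge u_{k'}$ with $y\equiv k'A'\ (\mathrm{mod}\ ab)$.

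It remains to re-index the $Na$ runs by the pairs $(i,j)$ and to identify $c(k')$ with $c_{ij}$, and this is the step I expect to be the main obstacle. The idea is to match the network parts: $u_{k'}=n(k'Ab\mod Na)$ sweeps all multiples of $n$ in $[0,N')$, and so does $iA'\mod(an)+jan$ (here $\gcd(Ab,a)=1$ makes $i\mapsto iAb\mod a$ a bijection on $\{0,\dots,a-1\}$). I would therefore pick $(i,j)$ with $u_{k'}=iA'\mod(an)+jan$, which forces $i\equiv k'\pmod a$. Then, reading the ceiling in \eqref{eq::cij_main} as "the smallest multiple of $ab$ exceeding the numerator," $c_{ij}= iA'\mod(ab) + \ceil{(u_{k'}-iA'\mod(ab))/ab}\,ab$ is exactly the smallest $y\ge u_{k'}$ with $y\equiv iA'\ (\mathrm{mod}\ ab)$. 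Finally $i\equiv k'\pmod a$ yields $(k'-i)A'=ab\cdot(\tfrac{k'-i}{a}An)\equiv 0\ (\mathrm{mod}\ ab)$, so $iA'\equiv k'A'\ (\mathrm{mod}\ ab)$ and the two characterizations of $c(k')$ and $c_{ij}$ coincide. The real care lies in making this $(i,j)\leftrightarrow k'$ correspondence a genuine bijection of the multiset and in handling the ceiling/congruence bookkeeping cleanly.
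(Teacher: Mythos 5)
Your proposal is correct, and it reaches the theorem by a genuinely different route than the paper. The paper views $\a_k$ as a function of the two periodic residue sequences $\b_k = kA' \mod B'$ and $\n_k = kA' \mod N'$ and then invokes two general-purpose lemmas: Lemma~\ref{lemma::meetings}, which says that over $\lcm$-many steps two periodic sequences realize exactly those pairings whose indices lie in the same residue class modulo the $\gcd$ of the periods (this is where the paper's index $r$, your $i \equiv k' \pmod a$, comes from), and Lemma~\ref{lemma::set_equation}, which computes $\s{-y,x,X} \mod Xx$ and produces the ceiling term in \eqref{eq::cij_main}. You instead extract the quotient $q = \floor{kA'/N'}$, write $\a_k = u_k + a s_k$ with $s_k = qNn \mod Bb$, and slice a period along the cosets $k = k' + tNa$: there $u$ is frozen while $s$ advances by $AbNn \mod Bb = b(ANn \mod B)$, and $\gcd(ANn,B)=1$ makes each coset contribute exactly one run $\s{c(k'),ab,B}$ with $c(k')$ the least $y \geq u_{k'}$ congruent to $k'A'$ modulo $ab$; the paper's double index $(i,j)$ reappears through your bijection between $k' \in \{0,\dots,Na-1\}$ and the multiples of $n$ in $[0,N')$, with the congruence $i \equiv k' \pmod a$ and hence $iA' \equiv k'A' \pmod{ab}$ closing the identification with \eqref{eq::cij_main}. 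The step you flagged as the main obstacle is in fact already settled by what you wrote: both $k' \mapsto u_{k'} = n(k'Ab \mod Na)$ and $(i,j) \mapsto iA' \mod (an) + jan$ enumerate the multiples of $n$ in $[0,N')$ bijectively, so the run-to-run matching is a genuine multiset equality. Your route buys two things: it is self-contained (no analogues of Lemmas~\ref{lemma::meetings} and \ref{lemma::set_equation} are needed), and your zero-location argument actually establishes \emph{minimality} of the period $aBN$ --- the paper merely infers the period from the component periods $aB$ and $aN$, which by itself only shows that $aBN$ \emph{is} a period. What the paper's lemma-based approach buys in exchange is reusability: the same machinery transfers verbatim to the shifted, delayed arguments of the extended model in Theorem~\ref{th::second}, whereas your coset computation would have to be redone with $kA'-\Delta_N-1$ and $kA'-\Delta_B-2-lN'$ in place of $kA'$.
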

\end{shaded}
\end{minipage}
\\

Theorem \ref{theo::main} reveals that the distribution of $\seq{\alpha}$ can be expressed as a superposition of sets of equally distanced values. This is illustrated in \figref{fig::structural_result}. The most influential quantities in this regard are the starting positions of these sets: $c_{ij}$. The ceiling and modulo operators in \eqref{eq::cij_main} lead to an intricate expression for the expected AoI which does not lend itself to practical use (derivation in the appendix):
\begin{gather}
    \label{eq::basic_exact}
    \E{\a_k} =
    \frac{B'+N'-n+ab}{2}
    - \frac{b}{N}
    \Bigg(
        \frac{a(b+1)}{2} \floor{\frac{N}{b}}
        \\ \notag
        +
        \ceil{(N \mod b) \frac{a}{b}}
        +
        \sum_{i=0}^{a-1} \sum_{j=0}^{N \mod b -1} \frac{ \left( j - \floor{\frac{ib}{a}} \right) n \mod b }{b}
    \Bigg)
\end{gather}
It can however be considerably simplified by using the central property of the ceiling operator: $x \leq \left\lceil x \right\rceil < x+1$, as the following Corollary presents:

\begin{minipage}[t]{0.925\columnwidth}
\colorlet{shadecolor}{orange!15}\begin{shaded}
\begin{corollary}
    \label{coro::main}
    In the basic model, the expected AoI is bounded by
    \begin{equation}
        \label{eq::coro1_approx}
        \E{\a_k} = \frac{B'+N' -n }{2} \pm \frac{ ab}{2}
    \end{equation}
    Using $\hE{\a_k} = \frac{B'+N' -n }{2}$ as an approximation, the maximal relative error becomes
    \begin{equation}
        \label{eq::coro1_error}
        \left\vert \frac{ \E{\a_t}-\hE{\a_t}  }{\E{\a_t}} \right\vert
        \leq \frac{ab}{(B-1)ab+n(aN-1)}
    \end{equation}
    The largest value of $\seq{\a}$ is
    \begin{equation}
        \label{eq::coro1_max}
        \max_k \, \a_k \leq B'+N'-n
    \end{equation}
\end{corollary}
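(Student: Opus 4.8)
The plan is to work directly from the value distribution in Theorem~\ref{theo::main} rather than from the intricate closed form \eqref{eq::basic_exact}, since the corollary's bounds are precisely what one gets by declining to resolve the ceiling in \eqref{eq::cij_main} exactly and instead sandwiching it with the stated property $x \le \lceil x\rceil < x+1$. By the theorem, one full period of $\seq{\a}$ consists of the $aBN$ values $c_{ij} + m\,ab$ with $i=0,\dots,a-1$, $j=0,\dots,N-1$, $m=0,\dots,B-1$, so everything reduces to arithmetic over these. I would treat the three claims in the order max, then expectation, then relative error, because the bound on the single quantity $c_{ij}$ that drives the max is the same ingredient the expectation needs.

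For \eqref{eq::coro1_max}, the largest value is $\max_{i,j} c_{ij} + (B-1)ab$. Applying $\lceil x\rceil < x+1$ to the ceiling in \eqref{eq::cij_main} makes the outer $iA'\mod(ab)$ cancel the inner $-\,iA'\mod(ab)$ and yields $c_{ij} < iA'\mod(an) + jan + ab$. I then bound each surviving term by its maximum, $iA'\mod(an)\le n(a-1)$ and $jan\le (N-1)an$, giving $c_{ij} < N'-n+ab$; adding $(B-1)ab$ produces $\max_k\a_k < B'+N'-n$, hence the integer bound \eqref{eq::coro1_max}.

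For \eqref{eq::coro1_approx}, averaging each progression over $m$ contributes $\tfrac{ab(B-1)}{2}$, so $\E{\a_k} = \tfrac{1}{aN}\sum_{i,j} c_{ij} + \tfrac{ab(B-1)}{2}$. Writing $c_{ij}=\tilde c_{ij}+\epsilon_{ij}$ with $\tilde c_{ij}=iA'\mod(an)+jan$ and $\epsilon_{ij}\in[0,ab)$ (again from the ceiling sandwich), the average correction is $\tfrac{ab}{2}\pm\tfrac{ab}{2}$, and $\tfrac{ab}{2}+\tfrac{ab(B-1)}{2}=\tfrac{B'}{2}$. It remains to evaluate $\tfrac{1}{aN}\sum_{i,j}\tilde c_{ij}$ exactly: averaging the $jan$ part gives $\tfrac{an(N-1)}{2}$, while the delicate $i$-sum I would handle by extracting the common factor, $iA'\mod(an)=n\,(iAb\mod a)$, and then invoking the coprimality forced by \eqref{eq::compositions}, in particular $\gcd(Ab,a)=1$, so that $iAb\mod a$ runs over a permutation of $\{0,\dots,a-1\}$ and the sum is $n\tfrac{a(a-1)}{2}$. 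Collecting terms with $N'=anN$, $B'=abB$ reduces the deterministic part to exactly $\tfrac{B'+N'-n}{2}$, leaving the claimed $\pm\tfrac{ab}{2}$.

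Finally, \eqref{eq::coro1_error} follows immediately: the absolute error $|\E{\a_k}-\hE{\a_k}|$ is at most $\tfrac{ab}{2}$, and the left end of the sandwich gives $\E{\a_k}\ge\tfrac{B'+N'-n-ab}{2}$, whence $|\E{\a_k}-\hE{\a_k}|/\E{\a_k}\le\tfrac{ab}{B'+N'-n-ab}$; substituting $B'=abB$, $N'=anN$ and regrouping rewrites the denominator as $ab(B-1)+n(aN-1)$. I expect the main obstacle to be the permutation step, namely deducing $\gcd(Ab,a)=1$ from the definitions in \eqref{eq::compositions} by unpacking $a=\gcd(B',N')$, $b=\gcd(A',B')$, $n=\gcd(A',N')$ into pairwise coprimality of the atomic factors. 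Without the exact value $n\tfrac{a(a-1)}{2}$ for that sum, the residual $-n$ term, and hence the clean approximation $\hE{\a_k}$, would fail to appear.
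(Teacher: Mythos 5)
Your proposal is correct and follows essentially the same route as the paper's own proof: you sandwich the ceiling in \eqref{eq::cij_main} via $x \le \lceil x \rceil < x+1$, evaluate the remaining sum over $i,j$ exactly, substitute into \eqref{eq::avg_aoi}, and obtain the maximum and the relative-error bound just as the appendix does (the denominator rewriting $B'+N'-n-ab = (B-1)ab + n(aN-1)$ included). Your explicit permutation step via $\gcd(Ab,a)=1$, which follows from $n=\gcd(A',N')=n\gcd(Ab,Na)$, is exactly what the paper's appeal to Lemma~\ref{lemma::seq_generation} encapsulates, so the two arguments differ only cosmetically.
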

\end{shaded}
\end{minipage}
\\

Proofs for both Theorem \ref{theo::main} and Corollary \ref{coro::main} are found in the appendix. Note that according to Corollary \ref{coro::main}, it is possible to increase $A$ (the part of $A'$ that is coprime to all other quantities) without changing any of the results. Furthermore, when using the center of the possible range for $\hE{\a_k}$ as an approximation for the true value $\E{\a_k}$, the relative error can become infinitely large if $B=1 = aN$. The reason lies in the fact that $\seq{\a}$ becomes a null-sequence for those parameters. Since the basic model rather serves theoretical purposes, we omit any further investigation in the quality of the approximation.

\pgfplotstableread{data_superposition_1.txt}{\superposition}
\pgfplotstableread{data_superposition_2.txt}{\superpositionn}
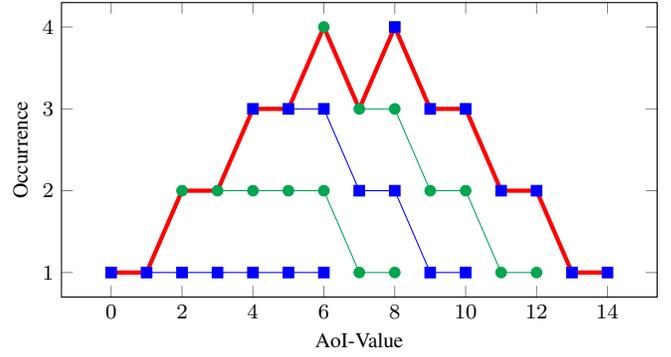
\begin{figure}
    \centering
    \footnotesize
    \begin{tikzpicture}
        \begin{axis}[
            width = 95mm,
            height = 55mm,
            xlabel = {AoI-Value},
            ylabel = {Occurrence},
            ]
            \addplot[ blue, mark=square* ] table [x={x1},y={y1}] {\superposition};
            \addplot[ Green, mark=* ] table [x={x2},y={y2}] {\superposition};
            \addplot[ blue, mark=square* ] table [x={x3},y={y3}] {\superposition};
            \addplot[ Green, mark=* ] table [x={x4},y={y4}] {\superposition};
            \addplot[ blue, mark=square* ] table [x={x5},y={y5}] {\superposition};
            \addplot[ red, ultra thick ] table [x={x6},y expr=\thisrow{y6}+0] {\superpositionn};
        \end{axis}
    \end{tikzpicture}
    \caption{Visualization of the structure implied by Theorem \ref{theo::main}: 5 sets of equally distanced values (connected marks), each starting with a different value. In superposition they yield the desired distribution (red). Parameters: $A=17,B=7,N=5,a=b=1,n=2$}
    \label{fig::structural_result}
\end{figure}

\section{Results for the Extended Model}

In analogy to the results for the basic model, we can find the same structural result for the distribution of the AoI under the extended model. Note however, that Theorem \ref{th::second} is restricted to a sequence sequence $\seq{\aa^{[l]}}$, i.e. a specific value of $l$. The parameter $l$ stands for the assumption that, no matter which processing cycle of agent A is considered, the last $l-1$ network transmissions prior to that cycle did fail. Though this might be an assumption that is impossible to fulfill, the theorem will still be crucial for following results.

\begin{minipage}[t]{0.925\columnwidth}
\colorlet{shadecolor}{orange!15}\begin{shaded}
\begin{theorem}
\label{th::second}
In the extended model, the period of the process $\seq{\aa^{[l]}}$ is $aBN$. And over $aBN$ consecutive elements, the AoI sequence $\seq{\aa^{[l]}}$ generates the following distribution of values:
\begin{equation}
    \label{eq::theorem_2_1}
    \ts{\aa^{[l]}_k}{k = 1}{aBN}
    = \bigcup_{\substack{ i = 0\dots a-1 \\ j = 0\dots N-1 }}
    \s{c_{ij}^{[l]},ab,B}
\end{equation}
with
\begin{multline}
       c_{ij}^{[l]} = 2+lN'+
    (iA'-\Delta_B-2-lN') \mod ab \ +  \\
            \left\lceil 
                \frac{\scriptstyle (iA'-\Delta_N-1) \mod an-(iA'-\Delta_B-2-lN') \mod ab + jan}{\scriptstyle ab}
            \right\rceil
            ab 
\end{multline}
\end{theorem}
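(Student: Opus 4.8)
The plan is to exploit the fact that $\seq{\aa^{[l]}}$ shares the \emph{exact} functional schema of $\seq{\a}$, so that the entire argument behind Theorem~\ref{theo::main} can be replayed with the two affine arguments carrying constant shifts and a constant offset added. Writing the basic process via the second line of \eqref{eq::basis} as
\[
    \a_k = \big(kA' \mod N'\big) + \big[(kA' \mod B') - (kA' \mod N')\big] \mod B',
\]
the extended sequence assembled from \eqref{eq::der_4} and \eqref{eq::derr_3} has the same shape,
\[
    \aa^{[l]}_k = (2+lN') + \big(u_k \mod N'\big) + \big[(v_k \mod B') - (u_k \mod N')\big]\mod B',
\]
with $u_k = kA'-\Delta_N-1$ and $v_k = kA'-\Delta_B-2-lN'$. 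The decisive structural feature — that one and the same $N'$-modular term $u_k\mod N'$ appears both as the leading summand and inside the bracket — is identical in both models, so the basic proof goes through term by term even though now $u_k\neq v_k$.

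For the period I would argue exactly as for Theorem~\ref{theo::main}: $u_k\mod N'$ has fundamental period $N'/\gcd(A',N')=Na$ and $v_k\mod B'$ has fundamental period $B'/\gcd(A',B')=Ba$, both unaffected by the constant shifts. Since $\gcd(N,B)=1$ (a consequence of \eqref{eq::compositions}, as $a=\gcd(B',N')=a\,\gcd(Bb,Nn)$ forces $\gcd(Bb,Nn)=1$), one gets $\lcm(Na,Ba)=aBN$, so the joint orbit $(u_k\mod N',\,v_k\mod B')$ — and hence $\seq{\aa^{[l]}}$, which is a function of it plus constants — has period $aBN$; the offset $2+lN'$ plays no role.

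For the distribution I would replay the reparametrization of $k\in\{1,\dots,aBN\}$ from Theorem~\ref{theo::main}, which splits the orbit into $aN$ arithmetic progressions of length $B$ and common difference $ab$, indexed by $(i,j)$ with $i\in\{0,\dots,a-1\}$, $j\in\{0,\dots,N-1\}$, plus a within-progression index $m\in\{0,\dots,B-1\}$. The step $ab$, length $B$, and count $aN$ are determined solely by $A',B',N'$ through $a,b,n,N,B$ and not by the additive constants, so this decomposition is inherited unchanged. The only effect of the extended model is on the starting points: every $N'$-modular quantity reduces (since $N'=N\cdot an$) to a residue modulo $an$ indexed by $i$ and now carries the shift $-\Delta_N-1$, every $B'$-modular quantity reduces (since $B'=B\cdot ab$) to a residue modulo $ab$ and carries $-\Delta_B-2-lN'$, and the global $2+lN'$ is added. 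Assembling these turns $c_{ij}$ of \eqref{eq::cij_main} into the claimed $c_{ij}^{[l]}$, yielding \eqref{eq::theorem_2_1}.

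The main obstacle is the bookkeeping inside the reduction of the large moduli $N',B'$ to $an,ab$: I must verify that shifting the two arguments by \emph{different} constants does not disturb the progression decomposition. Three points need care. First, the shift of $u_k\mod N'$ must be applied consistently in the leading term and in the bracket — automatic, since it is one quantity. Second, the summand $-lN'$ sits inside a $\mod ab$ reduction, and because $\gcd(N',ab)=a\neq ab$ it genuinely changes the fine residue; one must confirm it is absorbed cleanly as part of the constant $v$-shift for fixed $l$. Third, and most delicately, the $+1$ and $+2$ delays can push a value across a block boundary of width $ab$, so I must check that the ceiling operator in $c_{ij}^{[l]}$ re-accounts for such crossings precisely as the ceiling in \eqref{eq::cij_main} does in the unshifted case. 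Once these are settled, recombining the indices $(i,j,m)$ reproduces the stated multiset.
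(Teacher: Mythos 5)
Your proposal is correct and takes essentially the same route as the paper: the paper gives no separate proof of Theorem~\ref{th::second} precisely because, as it remarks, $\seq{\aa^{[l]}}$ is analyzed "the same way" as the basic model, i.e.\ by replaying the Theorem~\ref{theo::main} argument with the constant shifts $-\Delta_N-1$ and $-\Delta_B-2-lN'$ carried through Lemmas~\ref{lemma::meetings} and~\ref{lemma::set_equation} and the offset $2+lN'$ added, all of which land only in the starting points $c_{ij}^{[l]}$. The three bookkeeping points you flag are resolved automatically by Lemma~\ref{lemma::set_equation}, whose two cases \eqref{eq::lemma_3_1}/\eqref{eq::lemma_3_2} handle shifts of either sign and whose ceiling term is exactly the block-boundary correction you worry about, so your plan closes without additional work.
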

\end{shaded}
\end{minipage}
\\

The similarities between Theorem \ref{theo::main} and Theorem \ref{th::second} are obvious. Notably, the change in the structure is restricted to the starting points $c_{ij}$ of the sets with equally distanced elements. As before, an exact equation for $\E{\aa_k^{[l]}}$ can be developed, but this time does not yield a much simpler structure than the one implicitly given in Theorem \ref{th::second}.
More relevant is the following corollary which, crucially, is not restricted on the impossible assumption on $\seq{\aa^{[l]}}$. The exact reason for this is provided in the proof and is based on the probabilistic occurrence of the values of $\seq{\aa^{[l]}}$ in $\seq{\aa}$.

\begin{minipage}[t]{0.925\columnwidth}
\colorlet{shadecolor}{orange!15}\begin{shaded}
\begin{corollary}
    \label{coro::second}
    Let 
    \begin{equation}
        K = 2 + (\Delta_N + 1) \cmod n
    \end{equation}
    In the extended model, the expected AoI is bounded by
    \begin{equation}
        \label{eq::coro_2_1}
        \E{\aa_t} = \frac{B'+N'-n}{2} + K
        + \frac{ \bar{p} }{p}N'
        \pm \frac{ab}{2}
    \end{equation}
    Using $\hE{\aa_t} = \frac{B'+N' -n}{2} + K + \frac{ \bar{p} }{p}N'$ as approximation, the maximal relative error becomes
    \begin{equation}
        \label{eq::coro_2_3}
        \left\vert \frac{ \E{\aa_t}-\hE{\aa_t}  }{\E{\aa_t}} \right\vert
        \leq \frac{\scriptstyle ab}{\scriptstyle (B-1)ab+n(aN-1) + 2(K + \frac{ \bar{p} }{p}N')}
    \end{equation}
    With probability $\sigma$ it holds $\forall \, k$ that
    \begin{equation}
        \label{eq::coro_2_2}
        \aa_k
        \leq
        B'+N'-n + K + N'\left\lceil \frac{\ln(1-\sigma)}{\ln(1-p)} - 1 \right\rceil
    \end{equation}
\end{corollary}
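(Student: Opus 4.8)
The plan is to treat the idealized sequences $\seq{\aa^{[l]}}$ one value of $l$ at a time and only then recombine them probabilistically. For fixed $l$, Theorem~\ref{th::second} expresses the value distribution as a superposition of the arithmetic sets $\s{c_{ij}^{[l]},ab,B}$, each of which has mean $c_{ij}^{[l]}+\frac{(B-1)ab}{2}$. Following the route of Corollary~\ref{coro::main}, I would apply the ceiling estimate $x\le\ceil{x}<x+1$ in the form $\ceil{x}\,ab = x\,ab+\frac{ab}{2}\pm\frac{ab}{2}$ to the ceiling term of $c_{ij}^{[l]}$. The key bookkeeping is that the summand $(iA'-\Delta_B-2-lN')\mod ab$ occurs once explicitly and once with opposite sign inside the ceiling's numerator, so it cancels and leaves $c_{ij}^{[l]} = 2+lN'+\big((iA'-\Delta_N-1)\mod an\big)+jan+\frac{ab}{2}\pm\frac{ab}{2}$.

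Averaging this over $i=0\dots a-1$ and $j=0\dots N-1$ reduces, up to the $\pm\frac{ab}{2}$, to evaluating $\frac1a\sum_i (iA'-\Delta_N-1)\mod an$, which I expect to be the most laborious computation. As $i$ ranges over $0\dots a-1$ the residues $iA'\mod an$ exhaust the multiples of $n$ in $[0,an)$; shifting the argument by $-(\Delta_N+1)$ merely rotates this set, and I would show the shifted residues equal $\s{(\Delta_N+1)\cmod n,\,n,\,a}$, whose average is $(\Delta_N+1)\cmod n+n\frac{a-1}{2}$. Collecting the remaining constants then condenses to $\frac{B'+N'-n}{2}$ and isolates exactly $K=2+(\Delta_N+1)\cmod n$, so that $\E{\aa_k^{[l]}}=\frac{B'+N'-n}{2}+K+lN'\pm\frac{ab}{2}$.

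The conceptually delicate step, and the one I regard as the main obstacle, is passing from the idealized $\seq{\aa^{[l]}}$ to the genuine process $\seq{\aa}$. Although no realization can exhibit exactly $l$ trailing failures at \emph{every} cycle, for any single cycle the value of $\aa$ conditioned on exactly $l$ immediately preceding failed transmissions coincides with the corresponding element of $\seq{\aa^{[l]}}$. Since transmissions are i.i.d.\ Bernoulli$(p)$ and independent of the deterministic schedule, the number of trailing failures at a cycle is geometric with $P(l)=\bar{p}^{\,l}p$, independent of the within-period position; this is precisely the ``probabilistic occurrence'' that legitimizes the construction. Hence $\E{\aa_t}=\sum_{l\ge0}\bar{p}^{\,l}p\,\E{\aa_k^{[l]}}$, and inserting the per-$l$ result together with $\sum_l\bar{p}^{\,l}p=1$ and $\E{l}=\frac{\bar{p}}{p}$ turns $lN'$ into $\frac{\bar{p}}{p}N'$, yielding \eqref{eq::coro_2_1}. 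The relative error \eqref{eq::coro_2_3} then follows as in Corollary~\ref{coro::main} by dividing $\frac{ab}{2}$ by the smallest admissible expectation $\hE{\aa_t}-\frac{ab}{2}$ and using $B'+N'-n-ab=(B-1)ab+n(aN-1)$ to recover its denominator.

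For the high-probability bound \eqref{eq::coro_2_2} I would first establish the deterministic per-$l$ maximum $\max_k\aa_k^{[l]}\le B'+N'-n+K+lN'$, extending the basic-model bound \eqref{eq::coro1_max} by the offset $K+lN'$ already visible in $c_{ij}^{[l]}$. Controlling the randomness then reduces to the tail of a single geometric variable: demanding that at a given cycle $\aa_k\le B'+N'-n+K+LN'$ hold with probability at least $\sigma$ is equivalent to $P(l\le L)\ge\sigma$, i.e.\ $\bar{p}^{\,L+1}\le1-\sigma$; solving for the least integer $L$ gives $L=\ceil{\frac{\ln(1-\sigma)}{\ln(1-p)}-1}$, which is exactly the bracketed term in \eqref{eq::coro_2_2}.
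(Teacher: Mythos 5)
Your proposal is correct and follows essentially the same route as the paper: a per-$l$ mean computation via Theorem~\ref{th::second}, the ceiling bound, and the rotated residue set $\s{(\Delta_N+1)\cmod n,n,a}$, then a geometric mixture $\sum_l p\bar{p}^{\,l}\E{\aa_k^{[l]}}$ for \eqref{eq::coro_2_1} and a geometric tail bound for \eqref{eq::coro_2_2}. The only cosmetic difference is that you justify the mixture by conditioning on the number of trailing failures at a cycle, while the paper reaches the same weights through a time-average decomposition over $aBN$-long groups; you also correctly carry $lN'$ where the paper's proof has $N$/$N'$ typos.
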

\end{shaded}
\end{minipage}
\\

\section{Approximation Error}

Because the exact equation for $\E{\aa_k}$ is too intricate to analyze analytically, Figs. \ref{fig::error_ABN}, \ref{fig::error_abn} and \ref{fig::error_p} illustrate the distribution of the relative approximation error (with sign) $\frac{ \E{\aa_t}-\hE{\aa_t} }{\E{\aa_t}}$ for the extended model. For this investigation, we calculated both the exact and approximated value of $\E{\aa_k}$ for roughly $250 000$ different parameter configurations, where the following parameters are changed: $A',B',N',a,b,n,\Delta_B,\Delta_N,p$.
Note though, that parameter configurations who fulfill $A'=B'=N'$ are excluded; their approximation error is very high but can also be easily avoided since it is trivial to derive the AoI sequence in those cases. Technically, the illustrated distributions are histograms with finite bin sizes. It is however more comprehensive to illustrate the distributions with a closed line. Since the area under any probability distribution must be 1, values on the y-axis are omitted.

It is evident, that the average approximation error is not zero but rather about $-4\%$. I.e. one can expect our approximation to yield slightly higher average AoI values than will occur in reality. Furthermore, almost all relative approximation errors are contained within the bound $-4 \% \pm 10 \%$.

\pgfplotstableread{data/data_Ad.txt}{\dataAd}
\pgfplotstableread{data/data_Bd.txt}{\dataBd}
\pgfplotstableread{data/data_Nd.txt}{\dataNd}
\pgfplotstableread{data/data_a.txt}{\dataa}
\pgfplotstableread{data/data_b.txt}{\datab}
\pgfplotstableread{data/data_n.txt}{\datan}
\pgfplotstableread{data/data_dB.txt}{\datadB}
\pgfplotstableread{data/data_dN.txt}{\datadN}
\pgfplotstableread{data/data_p.txt}{\datap}

\pgfplotsset{style a/.style={
    scale only axis,
    width = 80mm,
    height = 28mm,
    xlabel = {Relative error made by approximation $\hat{\mathbb{E}}[\aa_t]$},
    ylabel = {Occurence},
    ytick style = {draw=none},
    xtick style = {draw=none},
    ytick={0},
    yticklabels={},
    ymajorgrids={true},
    xmajorgrids={true},
    xtick={-1,-0.5,-0.15,0,0.15,0.5,1},
    cycle list={ {blue}, {red}, {Green} }
}}

\pgfplotsset{style b/.style={
    scale only axis,
    width = 80mm,
    height = 28mm,
    ylabel = {Occurence},
    ytick={0},
    yticklabels={},
    ymajorgrids={true},
    ytick style = {draw=none},
    xmajorgrids={true},
    xtick={-1,-0.5,-0.15,0,0.15,0.5,1},
    xticklabels={},
    xtick style = {draw=none},
    cycle list={ {blue}, {red}, {Green} }
}}

\begin{figure}[H]
    \centering
    \footnotesize
    \begin{tikzpicture}
        \begin{axis}[style b]
            \addplot table [x={x},y={ymin}] {\dataAd};
            \addplot table [x={x},y={ymid}] {\dataAd};
            \addplot table [x={x},y={ymax}] {\dataAd};
            \legend{$A' = 2$,$A'=11$,$A'=21$};
        \end{axis};
    \end{tikzpicture}
    \begin{tikzpicture}
        \begin{axis}[style b]
            \addplot table [x={x},y={ymin}] {\dataBd};
            \addplot table [x={x},y={ymid}] {\dataBd};
            \addplot table [x={x},y={ymax}] {\dataBd};
            \legend{$B' = 2$,$B'=11$,$B'=21$};
        \end{axis};
    \end{tikzpicture}
    \begin{tikzpicture}
        \begin{axis}[style a]
            \addplot table [x={x},y={ymin}] {\dataNd};
            \addplot table [x={x},y={ymid}] {\dataNd};
            \addplot table [x={x},y={ymax}] {\dataNd};
            \legend{$N' = 2$,$N'=11$,$N'=21$};
        \end{axis};
    \end{tikzpicture}
    \caption{Distribution of the relative approximation error $\frac{ \E{\aa_t}-\hE{\aa_t} }{\E{\aa_t}}$ for fixed values of $A'$,$B'$ and $N'$. Values $2$ and $21$ are smallest and largest value assigned for $A'$,$B'$ and $N'$ in all parameter configurations.}
    \label{fig::error_ABN}
\end{figure}    
\begin{figure}[H]
    \centering
    \footnotesize    
    \begin{tikzpicture}
        \begin{axis}[style b]
            \addplot table [x={x},y={ymin}] {\dataa};
            \addplot table [x={x},y={ymid}] {\dataa};
            \addplot table [x={x},y={ymax}] {\dataa};
            \legend{$a = 1$,$a=6$,$a=11$};
        \end{axis};
    \end{tikzpicture}
    \begin{tikzpicture}
        \begin{axis}[style b]
            \addplot table [x={x},y={ymin}] {\datab};
            \addplot table [x={x},y={ymid}] {\datab};
            \addplot table [x={x},y={ymax}] {\datab};
            \legend{$b = 1$,$b=6$,$b=11$};
        \end{axis};
    \end{tikzpicture}
    \begin{tikzpicture}
        \begin{axis}[style a]
            \addplot table [x={x},y={ymin}] {\datan};
            \addplot table [x={x},y={ymid}] {\datan};
            \addplot table [x={x},y={ymax}] {\datan};
            \legend{$n = 1$,$n=6$,$n=11$};
        \end{axis};
    \end{tikzpicture}
    \caption{Distribution of the relative approximation error $\frac{ \E{\aa_t}-\hE{\aa_t} }{\E{\aa_t}}$ for fixed values of $a$,$b$ and $n$. Values $1$ and $11$ are smallest and largest value assigned for $a$,$b$ and $n$ in all parameter configurations.}
    \label{fig::error_abn}
\end{figure}    
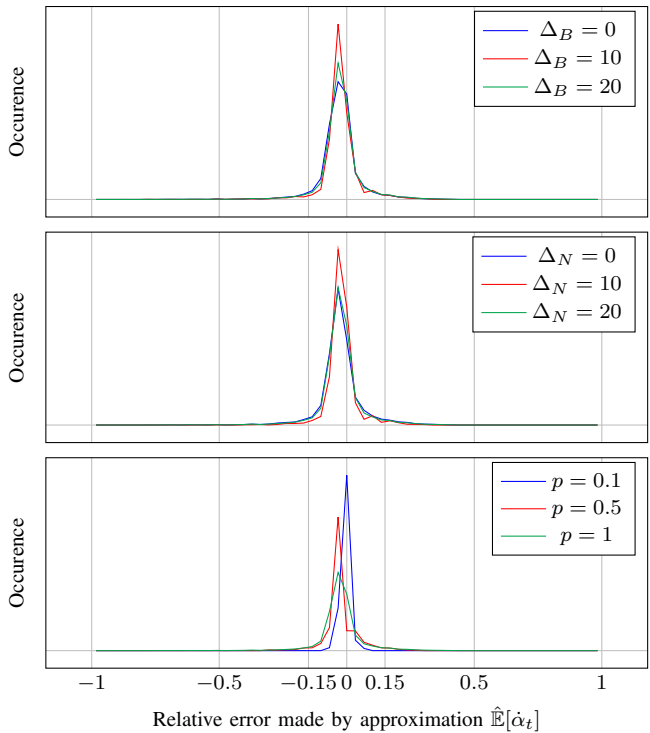
\begin{figure}[H]
    \centering
    \footnotesize
    \begin{tikzpicture}
        \begin{axis}[style b]
            \addplot table [x={x},y={ymin}] {\datadB};
            \addplot table [x={x},y={ymid}] {\datadB};
            \addplot table [x={x},y={ymax}] {\datadB};
            \legend{$\Delta_B = 0$,$\Delta_B=10$,$\Delta_B=20$};
        \end{axis};
    \end{tikzpicture}
    \begin{tikzpicture}
        \begin{axis}[style b]
            \addplot table [x={x},y={ymin}] {\datadN};
            \addplot table [x={x},y={ymid}] {\datadN};
            \addplot table [x={x},y={ymax}] {\datadN};
            \legend{$\Delta_N = 0$,$\Delta_N=10$,$\Delta_N=20$};
        \end{axis};
    \end{tikzpicture}
    \begin{tikzpicture}
        \begin{axis}[style a]
            \addplot table [x={x},y={ymin}] {\datap};
            \addplot table [x={x},y={ymid}] {\datap};
            \addplot table [x={x},y={ymax}] {\datap};
            \legend{$p = 0.1$,$p=0.5$,$p=1$};
        \end{axis};
    \end{tikzpicture}
    \caption{Distribution of the relative approximation error $\frac{ \E{\aa_t}-\hE{\aa_t} }{\E{\aa_t}}$ for fixed values of $\Delta_B$,$\Delta_N$ and $p$. Values $0$ and $20$ are smallest and largest value assigned for $\Delta_B$ and $\Delta_N$ in all parameter configurations.}
    \label{fig::error_p}
\end{figure}

\section{Conclusion}

For the described system (see \figref{fig::general_system}),
we derived a structural result on the distribution of the values of the AoI process. Based on that result we developed both an exact and an approximating expression for the mean of the process together with a probabilistic bound on the maximal occurring AoI. The simulations show that the approximation is good enough to enable efficient calculation of the mean AoI with limited error.

\bibliographystyle{IEEEtran}
\bibliography{library}

\section{Appendix}

To present the proofs as succinct as possible, we will often use the notation $\mathcal{Y} = \mathcal{X} + x$ to express that the set $\mathcal{Y}$ is generated by adding the value $x$ to every element of the set $\mathcal{X}$.

\begin{lemma}
    \label{lemma::seq_generation}
    Let $X,Y \in \mathbb{N}$ and $\gcd{(X,Y)} = 1$ (i.e. $X$ and $Y$ are coprime). Then 
    \begin{equation}
        \label{eq::gen_set_1}
        \ts{kX \mod Y}{k=1}{Y}
        = \s{0,1,Y} 
    \end{equation}
\end{lemma}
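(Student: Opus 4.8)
The plan is to recognize this as the standard fact that multiplication by a unit permutes the residues modulo $Y$: I want to show the map $k \mapsto kX \mod Y$ is a bijection from the complete residue system $\{1,\dots,Y\}$ onto $\{0,1,\dots,Y-1\}$. Since the latter set is exactly $\s{0,1,Y}$, establishing this bijection — as a multiset bijection, so that every target value occurs with multiplicity exactly one — proves the claim.

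First I would establish injectivity modulo $Y$. Suppose $k_1 X \equiv k_2 X \pmod{Y}$ for two indices $k_1, k_2 \in \{1,\dots,Y\}$. Then $Y \mid (k_1 - k_2) X$, and because $\gcd(X,Y) = 1$, Euclid's lemma forces $Y \mid (k_1 - k_2)$; as $|k_1 - k_2| < Y$, this gives $k_1 = k_2$. Hence the $Y$ values $X \mod Y,\ 2X \mod Y,\ \dots,\ YX \mod Y$ are pairwise distinct. Next, each such value lies in $\{0,1,\dots,Y-1\}$ directly by the definition of $\mod$ in \eqref{eq::mod_to_ceil}. Having $Y$ pairwise-distinct elements contained in a set of cardinality $Y$, a pigeonhole (counting) argument shows they must exhaust it, so the collection equals $\{0,1,\dots,Y-1\}$ with each value appearing exactly once. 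In particular $YX \mod Y = 0$ supplies the element $0$, and the indexing $k = 1,\dots,Y$ is therefore the natural one to recover $\s{0,1,Y}$.

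There is no substantial obstacle here; the result is elementary. The only point that genuinely requires care is the \emph{multiset} interpretation used throughout the paper: I must argue not merely set equality but that every value in $\{0,\dots,Y-1\}$ occurs with multiplicity exactly one. This is precisely what injectivity over a complete residue system delivers, so the counting step doubles as the verification of unit multiplicity. A minor remark worth including is that the coprimality hypothesis is essential — if $d = \gcd(X,Y) > 1$, the image would collapse onto the multiples of $d$, each attained $d$ times — which clarifies why this lemma is stated in the coprime setting before being applied to the reduced periods $A, B, N, a, b, n$ arising from the compositions in \eqref{eq::compositions}.
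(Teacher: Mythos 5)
Your proof is correct and follows essentially the same route as the paper's: both establish that the $Y$ consecutive values $kX \mod Y$ are pairwise distinct and then conclude by counting that they exhaust $\{0,\dots,Y-1\}$. The only difference is cosmetic — you derive distinctness via Euclid's lemma ($Y \mid (k_1-k_2)X$ and $\gcd(X,Y)=1$ imply $Y \mid (k_1-k_2)$), whereas the paper argues by contradiction through $\lcm(X,Y)$; your version is, if anything, slightly cleaner, and your explicit attention to multiset multiplicity is a point the paper leaves implicit.
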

\begin{proof}
    We proof via contradiction. Denote the following residues, starting with arbitrary $k \in \mathbb{N}$ as
    \begin{equation}
        \begin{aligned}
            kX \mod Y &= z_0
            \\
            (k+1)X \mod Y &= z_1
            \\
            &\vdots
            \\
            (k+Y-1)X \mod Y &= z_{Y-1}
            \\
            (k+Y)X \mod Y &= kX \mod Y = z_0
        \end{aligned}
    \end{equation}
    with $z_i \in \{0,1, \dots Y-1\}$.
    Suppose that there are at least two numbers from $\{z_0,z_1,\dots z_{Y-1}\}$ which are equal, say $z_{k'}$ and $z_{k''}$, then
    \begin{equation}
    \begin{gathered}
        (k+k')X \mod Y = (k+k'')X \mod Y
        \\
        \Rightarrow (k+k')X - (k+k'')X = (k'-k'')X = wY
    \end{gathered}
    \end{equation}
    for some $w \in \mathbb{N}$. Per construction we have $|k'-k''| < Y$ which implies that $\lcm(X,Y) = |k'-k''|X < YX$. This must be wrong since $X$ and $Y$ are coprime, proving \eqref{eq::gen_set_1}.
\end{proof}

\begin{lemma}
    \label{lemma::meetings}
    Let $\seq{x}$ be a sequence with period $X$ and denote its values according to $(x_{1+lX},\dots x_{X+lX}) = (\r{x}_1,\dots \r{x}_X)$ for all $l\in\mathbb{N}$. Likewise let $\seq{y}$ be a sequence with period $Y$ and $(y_{1+lY},\dots y_{Y+lY}) = (\r{y}_1,\dots \r{y}_Y)$. Further let $z = \gcd(X, Y)$. Then over any $\frac{XY}{z}$ consecutive steps $k$, the set of generated pairs $(x_k,y_k)$ contains exactly each pairing $(\r{x}_i,\r{y}_j)$ once, whose indices are part of the same residual class with regard to $z$:
    \begin{equation}
        \label{eq::lemma_sequence_pairs}
        \ts{(x_k,y_k) }{ k = 1}{\frac{XY}{z} }
        =
        \bigcup_{r = 1}^{z}
        \bigcup_{\substack{ i \in \s{r,z,\frac{X}{z}} 
            \\ j \in \s{r,z,\frac{Y}{z} } } }
        \left\{
            \vphantom{M^{M^M}_M} (\r{x}_i,\r{y}_j)
        \right\}
    \end{equation}
\end{lemma}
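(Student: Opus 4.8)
The plan is to reduce the statement to a bookkeeping argument about \emph{index} pairs, in the same spirit as the proof of Lemma~\ref{lemma::seq_generation}. First I would observe that both $X$ and $Y$ divide $\frac{XY}{z}$, so $\frac{XY}{z}$ is a common period of both sequences and hence of the pair sequence $(x_k,y_k)$; consequently any $\frac{XY}{z}$ consecutive steps yield the same multiset of pairs, and it suffices to take $k = 1,\dots,\frac{XY}{z}$. Writing $x_k = \r{x}_{i(k)}$ and $y_k = \r{y}_{j(k)}$, where $i(k)\in\{1,\dots,X\}$ is the representative of $k$ modulo $X$ and $j(k)\in\{1,\dots,Y\}$ the representative of $k$ modulo $Y$, the value pair at step $k$ is completely determined by the index pair $(i(k),j(k))$. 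Since we work with multisets, it therefore suffices to determine exactly which index pairs $(i,j)$ occur, and with what multiplicity; the value pairs then follow as a function of the index pairs.

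Next I would establish two facts. \emph{(i) Every generated index pair satisfies $i \equiv j \pmod z$.} This is immediate: $i(k)\equiv k \pmod X$ and $j(k)\equiv k \pmod Y$, and since $z$ divides both $X$ and $Y$ we get $i(k)\equiv k \equiv j(k)\pmod z$. \emph{(ii) Distinct steps in the window produce distinct index pairs.} If $(i(k),j(k)) = (i(k'),j(k'))$ then $k\equiv k'\pmod X$ and $k\equiv k'\pmod Y$, hence $k\equiv k'\pmod{\lcm(X,Y)}$, i.e.\ $k\equiv k'\pmod{\frac{XY}{z}}$; within a window of length $\frac{XY}{z}$ this forces $k=k'$. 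The argument for (ii) is exactly the $\lcm$/coprimality argument already used in the proof of Lemma~\ref{lemma::seq_generation}.

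Finally I would close the argument by counting. The index pairs appearing on the right-hand side of \eqref{eq::lemma_sequence_pairs} are precisely those $(i,j)$ with $i\equiv j\pmod z$: for each residue $r\in\{1,\dots,z\}$ the admissible values of $i$ are exactly the indices in $\{1,\dots,X\}$ congruent to $r$ modulo $z$, namely $\s{r,z,\frac{X}{z}}$, of which there are $\frac{X}{z}$, and likewise $\frac{Y}{z}$ admissible values of $j$; summing over $r$ gives $z\cdot\frac{X}{z}\cdot\frac{Y}{z} = \frac{XY}{z}$ index pairs in total. By (i) and (ii) the $\frac{XY}{z}$ steps of the window produce $\frac{XY}{z}$ pairwise distinct index pairs, all of which lie in this set. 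Since the set itself has exactly $\frac{XY}{z}$ elements, the generated pairs must be exactly this set, each occurring once, which is the claimed identity.

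As for where the real content lies, the only nontrivial ingredient is step (ii), which is precisely the Chinese-remainder/uniqueness statement that a simultaneous congruence modulo $X$ and modulo $Y$ pins a residue down modulo $\lcm(X,Y)$; everything else is index translation plus a clean count. I would expect the main care to be needed in the $1$-based indexing --- matching residues to representatives in $\{1,\dots,X\}$ and $\{1,\dots,Y\}$ (in particular handling the case $X \mid k$, where the representative is $X$ rather than $0$) so that the residue classes line up with the sets $\s{r,z,\cdot}$ exactly as written.
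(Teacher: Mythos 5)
Your proof is correct, and it reaches the lemma by a genuinely different route than the paper. The paper first proves the coprime case $z=1$ (by the same $\lcm$-contradiction you cite from Lemma~\ref{lemma::seq_generation}) and then handles general $z$ by a blocking construction: it fuses $z$ consecutive elements of each sequence into super-elements $\t{x}_\kappa = (x_{\kappa z+1},\dots,x_{\kappa z+z})$, obtains sequences with coprime periods $\frac{X}{z}$ and $\frac{Y}{z}$, applies the coprime case to these, and finally unpacks the blocks to read off the residue-class structure. You instead argue directly for arbitrary $z$ at the level of index pairs: injectivity of $k \mapsto (i(k),j(k))$ on any window of length $\lcm(X,Y) = \frac{XY}{z}$ (the uniqueness half of the Chinese Remainder Theorem), the congruence constraint $i(k) \equiv j(k) \pmod z$, and a count showing that the admissible index pairs number exactly $z \cdot \frac{X}{z} \cdot \frac{Y}{z} = \frac{XY}{z}$, so that pigeonhole forces equality. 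Your version buys three things: it isolates the single number-theoretic ingredient in one step rather than splitting it between a base case and a reduction; it makes the multiset bookkeeping airtight by reducing value pairs to index pairs \emph{before} counting (the paper's phrase ``a specific pairing is generated twice'' quietly conflates values with indices --- a repeated value within one period would break the literal reading, and your reduction repairs this); and it avoids the paper's final re-indexing step (``with slight redefinition of $i$ and $j$\dots''), which is only sketched. What the paper's route buys in exchange is that the residue-class structure becomes geometrically inevitable --- a block position $\kappa$ can only ever meet the same block position --- whereas in your proof the constraint $i \equiv j \pmod z$ appears as a one-line congruence. Your explicit observation that $\frac{XY}{z}$ is a common period of the pair sequence, which is what justifies the word ``any'' in the statement, is also a detail the paper leaves implicit.
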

\begin{proof}
    For $z=1$, \eqref{eq::lemma_sequence_pairs} means that every possible pairing occurs exactly once.
    Conversely, assume that a specific pairing $(\r{x}_i,\r{y}_j)$ is generated twice during the $XY$ steps and denote these steps with $k'$ and $k''$. Then $|k'' - k'| = \Delta < XY$ and, due to the periodicity, both $X$ and $Y$ must be divisors of $\Delta$, making $\Delta$ a common multiple smaller $XY$. Since $X$ and $Y$ are coprime this is a contradiction.
    
    For $z \in \mathbb{N}$, construct two new sequences $(\t{x}_\kappa)$ and $(\t{y}_\kappa)$ according to
    \begin{equation}
        \label{eq::regourping}
    \begin{aligned}
        \t{x}_\kappa &= ( {x}_{\kappa z+1},\dots {x}_{\kappa z+z} )
        \\
        \t{y}_\kappa &= ( {y}_{\kappa z+1},\dots {y}_{\kappa z+z} )
    \end{aligned}
    \end{equation}
    I.e. one element of $(\t{x}_\kappa)$ consists of $z$ consecutive elements from $(x_k)$.
    Per construction $(\t{x}_\kappa)$ has periodicity $\frac{X}{z}$ and $(\t{y}_\kappa)$ has periodicity $\frac{Y}{z}$. Also, $\frac{X}{z}$ and $\frac{Y}{z}$ are coprime.
    Hence, according to the first part of the proof, over any $\frac{X}{z}\cdot \frac{Y}{z}$ consecutive steps, each of the $\frac{X}{z}$ different elements of $(\t{x}_\kappa)$ will eventually pair with each of the $\frac{Y}{z}$ different elements of $(\t{y}_\kappa)$.
    Using \eqref{eq::regourping} this implies that over any $\frac{XY}{z^2}\cdot z$ consecutive steps of the original sequence, each $x_{i z + \kappa}$ meets every $y_{j z + \kappa}$ exactly once, with $i = 1,\dots \frac{X}{z} ; \ j = 1,\dots \frac{Y}{z}; \ \kappa = 1\dots z$. With slight redefinition of $i$ and $j$ (such that they include $z$) this means that for given $\kappa$, $\r{x}_{i+\kappa}$ meats $\r{y}_{j+\kappa}$ iff $i \in \{0,z,2z,\dots \left(\frac{X}{z}-1\right) z \}$ and $j \in \{0,z,2z,\dots \left(\frac{Y}{z}-1\right)z\}$. The proposition follows.
\end{proof}

\begin{lemma}
    \label{lemma::set_equation}
    Let $X,x,y \in \mathbb{N}$, then
    \begin{align}
        \label{eq::lemma_3_1}
        \s{y,x,X} \mod Xx
        &=
        \s{ y \mod x ,x,X}
        \\
        \label{eq::lemma_3_2}
        \s{-y,x,X} \mod Xx
        &=
        \s{ y \cmod x ,x,X}
    \end{align}
\end{lemma}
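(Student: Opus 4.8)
The plan is to reduce both identities to a single elementary fact about reducing an integer of the form $mx+s$ modulo $Xx$. First I would establish the auxiliary claim that for any $m \in \mathbb{Z}$ and any $s$ with $0 \le s < x$,
\[
    (mx + s) \mod Xx = (m \mod X)\, x + s.
\]
This is the only step requiring genuine care: one must check that the offset $s$ never interferes with the integer division by $Xx$. Concretely, writing $m = \floor{m/X}X + (m \mod X)$ and using $0 \le s/x < 1$, one shows $\floor{(mx+s)/(Xx)} = \floor{m/X}$, because $0 \le (m \mod X) + s/x < X$. The identity then follows by substituting this floor value into the definition of $\mod$ in \eqref{eq::mod_to_ceil}. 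I expect this ``no carry'' verification to be the main (and essentially only) obstacle; everything downstream is bookkeeping.

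Given the auxiliary claim, \eqref{eq::lemma_3_1} follows directly. I would write $y = \floor{y/x}\,x + (y \mod x)$ with $s := y \mod x \in \{0,\dots,x-1\}$, so that the $k$-th element of $\s{y,x,X}$ is $y + kx = (\floor{y/x} + k)\,x + s$. Applying the auxiliary claim with $m = \floor{y/x} + k$ gives $(y+kx) \mod Xx = ((\floor{y/x}+k) \mod X)\,x + s$. As $k$ ranges over $0,\dots,X-1$ the integers $\floor{y/x}+k$ form a window of $X$ consecutive integers, so their residues mod $X$ take each value in $\{0,\dots,X-1\}$ exactly once; hence the resulting multiset is $\{s, s+x, \dots, s+(X-1)x\} = \s{y \mod x, x, X}$, as claimed. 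The multiset counts match precisely because the residues are hit bijectively.

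For \eqref{eq::lemma_3_2} the same machinery applies once I put $-y$ into the canonical form $m'x + s'$ with $s' \in [0,x)$, and here the definition of the complementary modulo is exactly what is needed. Since $y \cmod x = \ceil{y/x}\,x - y$ lies in $\{0,\dots,x-1\}$, I can write $-y = -\ceil{y/x}\,x + (y \cmod x)$, so the $k$-th element of $\s{-y,x,X}$ is $-y + kx = (k - \ceil{y/x})\,x + (y \cmod x)$. Setting $s := y \cmod x$ and $m = k - \ceil{y/x}$, the auxiliary claim together with the same consecutive-window argument yields the multiset $\s{y \cmod x, x, X}$. The only additional detail to verify is that $y \cmod x \in [0,x)$, which is immediate from its definition (it equals $0$ when $x \mid y$ and $x - (y \mod x)$ otherwise); alternatively one could invoke the stated identity $(-y) \mod x = y \cmod x$ to shorten this step.
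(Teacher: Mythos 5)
Your proof is correct, but it takes a genuinely different route from the paper's. The paper proves \eqref{eq::lemma_3_2} first by a block-splitting argument: for $y < Xx$ it partitions $\s{-y,x,X}$ into its negative block $\s{-y,x,\floor{y/x}}$ and its non-negative block $\s{-y+\ceil{y/x}x,x,X-\floor{y/x}}$, observes that reduction mod $Xx$ shifts only the negative block upward by $Xx$, and checks that the shifted block begins exactly one step $x$ above the largest non-negative element, so the two blocks re-merge into the single progression $\s{-y+\ceil{y/x}x,x,X}$; the case $y \geq Xx$ then needs a separate reduction step, and \eqref{eq::lemma_3_1} is obtained only afterwards as a corollary via $(-y) \mod x = y \cmod x$. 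You instead argue elementwise: your ``no-carry'' identity $(mx+s) \mod Xx = (m \mod X)\,x + s$ for $m \in \mathbb{Z}$, $0 \leq s < x$, normalizes each element of the progression individually, and the multiset equality follows because $X$ consecutive integers hit each residue class mod $X$ exactly once. Your version buys uniformity — no case distinction between $y < Xx$ and $y \geq Xx$, since the auxiliary claim holds for all integers $m$ including negative ones, and both identities fall out symmetrically from the one auxiliary fact rather than one being derived from the other — at the cost of leaning on the floor-based modulo extended to negative arguments, which is consistent with the paper's convention $(-x) \mod y = x \cmod y$ stated alongside \eqref{eq::mod_to_ceil}. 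The paper's rearrangement is more pictorial (it shows explicitly how the progression wraps around the interval $[0,Xx)$), while your residue-bijection makes the multiset bookkeeping more explicit; both arguments are sound, and your verifications of the two delicate points — the floor computation $\floor{(mx+s)/(Xx)} = \floor{m/X}$ and the membership $y \cmod x \in \{0,\dots,x-1\}$ — are exactly the right places to spend the care.
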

\begin{proof}
    We proof \eqref{eq::lemma_3_2} and initially assume that $y < Xx$.
    Then $\s{-y,x,X}$ will consist of positive and possibly negative values. Continuously adding $x$ to $-y$ eventually starts producing non-negative results, beginning with the  $\left\lceil\frac{y}{x} \right\rceil$-th summation step. This motivates the separation
    \begin{equation}
    \begin{gathered}
        \s{-y,x,X}
        \\
            =
            \underbrace{
            \s{-y,x,\left\lfloor \frac{y}{x} \right\rfloor}
            }_{\displaystyle \text{negative values}}
            \cup
            \underbrace{
            \s{-y+\left\lceil \frac{y}{x} \right\rceil x ,x,X-\left\lfloor \frac{y}{x} \right\rfloor}
            }_{\displaystyle \text{non-negative values}}
    \end{gathered}
    \end{equation}
    The largest term of the "non-negative" set is $-y+(X-1)x$. Therefore, taking all values modulo $(Xx)$ only effects the "negative" set by adding $Xx$. Doing so, the smallest term in the "negative" set becomes $-y + Xx$ (which is exactly one $x$ greater than the largest term of the "non-negative" set). Hence, after the ${} \mod (Xx)$ operation, the elements of both sets can be rearranged into a new set according to
    \begin{equation}
    \begin{gathered}
        \left( \s{0,x,X} - y \right) \mod (Xx)
        \\
            =
            \s{-y+Xx,x,\left\lfloor \frac{y}{x} \right\rfloor}
            \cup
            \s{-y+\left\lceil \frac{y}{x} \right\rceil x ,x,X-\left\lfloor \frac{y}{x} \right\rfloor}
        \\
            =
            \s{-y+\left\lceil \frac{y}{x} \right\rceil x ,x,X}
    \end{gathered}
    \end{equation}
    
    Suppose now that $y \geq Xx$ and in particular $y \mod Xx = z$. Then we get
    \begin{equation}
    \begin{aligned}
        \s{-y,x,X} \mod Xx
        &= \s{-y \mod Xx,x,X} \mod Xx
        \\
        &= \s{-z,x,X} \mod Xx
        \\
        &\overset{*}{=} \s{-z+\left\lceil \frac{z}{x} \right\rceil x ,x,X}
        \\
        &\overset{**}{=} \s{-y+\left\lceil \frac{y}{x} \right\rceil x ,x,X}
    \end{aligned}
    \end{equation}
    where (*) is true because $z < Xx$ and (**) is readily verified by substitution ($y = wXx+z$ for some $w \in \mathbb{N}$). To proof \eqref{eq::lemma_3_1}, see that
    \begin{equation}
        (-y) \mod x = (-y) - \left\lfloor \frac{(-y)}{x} \right\rfloor x = -y +\left\lceil \frac{y}{x} \right\rceil x = y \cmod x
    \end{equation}
    and hence substitution of $-y$ for $y$ in \eqref{eq::lemma_3_2} yields \eqref{eq::lemma_3_1}.
\end{proof}


\noindent
\textbf{Proof to Theorem 1}
\\ \noindent
The evolution of the AoI sequence $\seq{\alpha}$ according to \eqref{eq::basis} can be interpreted as a function of the sequences $\seq{\eta}$ and $\seq{\beta}$, identified by:
\begin{multline}
    \label{eq::evo_2}
    \a_k = \a_k(\beta_k, \eta_k) = 
    \n_k + (\b_k - \n_k) \mod B' =
    \\
                        \underbrace{ \vphantom{\frac{1}{2}}
    kA' \mod N'         
                        }_{\displaystyle =: \eta_k}
    +  \big( 
                        \underbrace{ \vphantom{\frac{1}{2}}
    kA' \mod B'         
                        }_{\displaystyle =: \beta_k}
    -                   \underbrace{ \vphantom{\frac{1}{2}}
    kA' \mod N'      
                        }_{\displaystyle =: \eta_k}
    \big) \mod B'
\end{multline}

This means that
\begin{equation}
\begin{aligned}
    \b_k &= b \big[ k(nA) \mod (aB)  \big]
    \\
    \n_k &= n \big[ k(bA) \mod (aN)  \big]
\end{aligned}
\end{equation}
and we know from Lemma \ref{lemma::seq_generation} that $\seq{\beta}$ and $\seq{\eta}$ do generate the sets $\s{0,b,aB}$ and $\s{0,n,aN}$ (over their respective periods). With one element from each set, we can readily describe the resulting value for $\a_k$. However, not all possible element pairings are realized at the same step $k$. Our first priority is therefore to determine which elements (and values) are realized simultaneously.
To that end it is prudent to first denote the elements in appearing order as
\begin{equation}
    \begin{aligned}
        \b_{1+l(aB)} &=: \r{b}_1 \\
        \b_{2+l(aB)} &=: \r{b}_2 \\
        & \vdots \\
        \b_{aB +l(aB)} &=: \r{b}_{aB}
    \end{aligned}
    \qquad
    \begin{aligned}
        \n_{1+l(aN)} &=: \r{n}_1 \\
        \n_{2+l(aN)} &=: \r{n}_2 \\
        & \vdots \\
        \n_{aN +l(aN)} &=: \r{n}_{aN}
    \end{aligned}
\end{equation}
with $l\in\mathbb{N}$.

Since the periods of $\seq{\b}$ and $\seq{\n}$ are $aB$ and $aN$, the period of $\seq{\a}$ must be $aBN$.
From Lemma \ref{lemma::meetings}, we know that during those $aBN$ steps only those $\r{b}_i$ and $\r{n}_j$ will occur at the same step, whose indices belong to the same residue class $r\in \mathbb{N}$ with respect to the greatest common divisor $a$:
\begin{equation}
\begin{gathered}
    \ts{\a_k(\b_k,\n_k)}{k=1}{aBN}
    =
    \bigcup_{r=0}^{a-1} \bigcup_{i=1}^{B} \bigcup_{j=1}^N \big\{ \a_k(\r{b}_{ia+r}, \r{n}_{ja+r}) \big\}
\end{gathered}
\end{equation}
Looking at the value behind $\r{b}_{ia+r}$ we have
\begin{equation}
\begin{aligned}
    \r{b}_{ia+r}
    &= (ia+r)A' \mod B'
    \\&
    = (iaA' \mod B' + rA') \mod B'
    \\&
    = (iabnA \mod abB + rA') \mod B'
    \\&
    = \left( ab (inA \mod B) + rA' \right) \mod B'
\end{aligned}
\end{equation}
Over all possible $i$, i.e. for $i=1,\dots B$, the term $ab (inA \mod B)$ generates the set $\s{0,ab,B}$. And therefore, using lemma \ref{lemma::set_equation}
\begin{equation}
    \label{eq::values_for_b}
    \begin{aligned}
    \ts{\r{b}_{ia+r}}{i=1}{B}
    &=
    \left( \s{0,ab,B} + rA' \right) \mod abB 
    \\
    &=
    \s{rA' \mod ab, ab, B}
    \end{aligned}
\end{equation}
Of course, the same holds true for $\r{n}_{ja+r}$:
\begin{equation}
    \label{eq::values_for_n}
    \ts{\r{n}_{ja+r}}{j=0}{N}
    =
    \s{rA' \mod an, an, N}
\end{equation}

Knowing that we can look back at our starting equation:
\begin{equation}
    \a_k = \n_k + (\b_k - \n_k) \mod B'
\end{equation}
Over $\seq{\a}$'s period $aBN$, every value from \eqref{eq::values_for_b} will pair up with every value from \eqref{eq::values_for_n} exactly once. I.e. for the value $\r{n}_{ja+r}$ from \eqref{eq::values_for_n} the sequence $\seq{\a}$ generates the values
\begin{equation}
    \r{n}_{ja+r} + 
    \left( 
    \s{rA' \mod ab, ab, B}
    -
    \r{n}_{ja+r}
    \right) \mod B'
\end{equation}
which according to lemma \ref{lemma::set_equation} is equal to
\begin{equation}
    \label{eq::partial}
    \s{
    rA' \mod ab +
    \left\lceil
    \frac{\r{n}_{ja+r} - rA' \mod ab}{ab}
    \right\rceil ab
    ,ab,B
    }
\end{equation}
To yield an expression for all generated values, we have to take the union of \eqref{eq::partial} over all possible $\r{n}_{ja+r}$. I.e. the union over all possible residue classes $r = 0,\dots a-1$, and per residue class over all $j=0,\dots N-1$. Since we do not care for the order of the generated values, we can invoke \eqref{eq::values_for_n} for the union over $j$ and obtain
\begin{equation}
    \label{eq::partial2}
    \begin{gathered}
    \bigcup_{j=0}^{N-1}
    \s{ c_{rj} ,ab,B}
    \\
    c_{rj} = 
    rA' \mod ab +
    \left\lceil
    \frac{rA' \mod an + jan - rA' \mod ab}{ab}
    \right\rceil ab
    \end{gathered}
\end{equation}
Replacing $r$ with $i$ and taking the union over $i = 0,\dots a-1$ yields the proposition. \qed
\\

\noindent
\textbf{Derivation of Equation \eqref{eq::basic_exact}}
\\
To obtain the expected AoI, we need an expression for the expected value of $c_{ij}$, because
\begin{equation}
    \label{eq::avg_aoi}
\begin{aligned}
    \E{\a_k}
    &=
    \frac{1}{aBN} \sum_{k=1}^{aBN} \a_k
    =
    \frac{1}{aBN}
    \sum_{\substack{ i = 0\dots a-1 \\ j = 0\dots N-1 }}
    \sum_{x \in \s{c_{ij},ab,B}} x
    \\
    &=
    \frac{1}{aN}
        \sum_{\substack{ i = 0\dots a-1 \\ j = 0\dots N-1 }}
        c_{ij}
     + \frac{(B-1)ab}{2}
\end{aligned}
\end{equation}
Looking at $c_{ij}$ as defined in \eqref{eq::cij_main}, the term containing the ceiling operator can be developed in the following way if we consider its sum over $i = 0,\dots a-1$ \textit{and} $j=0,\dots N-1$:
\begin{align}
        & \mathrel{\phantom{=}} \notag
        \sum \left\lceil
        \frac{iA' \mod an + jan - iA' \mod ab}{ab}
        \right\rceil
    \\
        & \overset{(1)}{=} \notag
        \sum \ceil{
        \frac{n \left( ib - \floor{\frac{ib}{a}}a \right) + jan - b \left( in - \floor{\frac{in}{a}}a \right)}{ab}
        }
    \\
        &= \notag
        \sum \floor{\frac{in}{a} } +
        \sum \ceil{ \frac{
        \left( j - \floor{\frac{ib}{a}} \right) n
        }{b} 
        }
\end{align}
Equality $(1)$ holds because of \eqref{eq::compositions} and \eqref{eq::mod_to_ceil} and the fact that the factor $A$ merely reorders the occurrence of the rest classes and can therefore be omitted due to the sum (see Lemma \ref{lemma::seq_generation}).

Applying the substitution
\begin{equation}
    \frac{x}{y} = \underbrace{ \frac{x - x \mod y}{y} }_{\displaystyle \in \mathbb{N}} + \frac{x \mod y}{y}
\end{equation}
to the remaining ceiling and floor operators makes it possible to retract the integer part from the operators.
The resulting sums are readily evaluated except the following two:
\begin{multline}
    \label{eq::first_hard}
    \sum
    \ceil{
    \frac{
    \left( j - \floor{\frac{ib}{a}} \right) n \mod b
    }{b}
    }
    \\
    =
    aN - a \floor{ \frac{N}{b} } - \ceil{ (N \mod b) \frac{a}{b} } 
\end{multline}
and
\begin{multline}
    \label{eq::second_hard}
    \sum \frac{\left( j - \floor{\frac{ib}{a}} \right) n \mod b}{b}
    =
    \frac{a(b-1)}{2} \floor{\frac{N}{b}}
    \\
    +
    \sum_{i=0}^{a-1} \sum_{j=0}^{N\mod b -1} \frac{ \left( j - \floor{\frac{ib}{a}} \right) n \mod b }{b}
\end{multline}
Equation \eqref{eq::first_hard} follows since the ceiling operator always evaluates to $1$ except when $j-\floor{\frac{ib}{a}}$ is a multiple of $b$. In those cases it evaluates to $0$. For $i=0$ this happens $\floor{\frac{N}{b}}+1$ times. This does not change as long as $\floor{\frac{ib}{a}} < (N-1) \mod b$, i.e. $\ceil{ \left( N \mod b \right) \frac{a}{b} }$ times. For all other cases a multiple of $b$ is only realized $\floor{\frac{N}{b}}$ times.

Equation \eqref{eq::second_hard} follows because any consecutive $b$ summands over $j$ evaluate to $\frac{b-1}{2}$. And there are only $\floor{\frac{N}{b}}$ such sequences of summands over $j$.

Rejoining all of these results, one can yields the following expression for the sum over all $c_{ij}$:
\begin{equation}
    \label{eq::some}
    \begin{gathered}
    \frac{1}{aN} \sum_{\substack{i=0,\dots a-1 \\ j = 0,\dots N-1}} c_{ij}
    =
    \frac{N'-n+2ab}{2}
    - \frac{b}{N}
    \Bigg(
        \frac{a(b+1)}{2} \floor{\frac{N}{b}}
        \\
        +
        \ceil{(N \mod b) \frac{a}{b}}
        +
        \frac{1}{b} \sum_{i=0}^{a-1} \sum_{j=0}^{N \mod b -1} \left( j - \floor{\frac{ib}{a}} \right) n \mod b
    \Bigg)
    \end{gathered}
\end{equation}
Finally \eqref{eq::basic_exact} results by substituting \eqref{eq::some} into \eqref{eq::avg_aoi}. \qed
\\

\noindent
\textbf{Proof to Corollary \ref{coro::main}}
\\
To obtain the approximation \eqref{eq::coro1_approx}, we take \eqref{eq::cij_main} at face value and use the bounds of the ceiling operator. This gives
\begin{equation}
    \label{eq::basis_bound}
    c_{ij}
    =
    iA' \mod (an) + jan + \frac{ab \pm ab}{2}
\end{equation}
Applying Lemma \ref{lemma::seq_generation} yields
\begin{equation}
    \frac{1}{aN}\sum_{ij} c_{ij}
    =
    \frac{n(aN-1)+ab \pm ab}{2}
\end{equation}
Substituting this into \eqref{eq::avg_aoi} finalizes the derivation.

For \eqref{eq::coro1_error}, note that once $\hE{\a_k} = \frac{B'+N' -n}{2}$ is set, the largest relative error occurs if $\E{\a_k}$ is on the lower boundary of the possible set, i.e. if $\E{\a_k} = \frac{B'+N'-n-ab}{2}$. The distance between $\hE{\a_k}$ and $\E{\a_k}$ is then $\frac{ab}{2}$. Division yields the proposition.

The largest value of $\seq{\a}$ depends on the largest value of $c_{ij}$ which again can be bounded via the bounds of the ceiling operator:
\begin{equation*}
    \begin{aligned}
        \max_k \ \a_k 
            & = \max_{{ij}} \ c_{ij} + (B-1)ab
            \\
            & = \max_i \ \{ iA'\mod an \} +  \max_j \ \{ jan \} + B' 
            \\
            & = n(a-1) + (N-1)an + B' = B' + N' - n
    \end{aligned}
\end{equation*}
which proves corollary. \qed
\\

\noindent
\textbf{Proof to Corollary \ref{coro::second}}
\\
    We separate the values generated by $\seq{\aa}$ in $aBN$-long groups of elements. Then the expectation can be expressed via
    \begin{equation}
    \label{eq::expectation}
    \begin{aligned}
        \E{\aa_k}
        &= \lim_{K \to \infty} \frac{1}{K} \sum_{k=0}^K \aa_k
        \\
        &= \lim_{K \to \infty} \frac{1}{K} \sum_{k'=0}^K \frac{1}{aBN} \sum_{k=0}^{aBN-1} \aa_{k'aBN+k}
        \\
        &= \frac{1}{aBN} \sum_{k=0}^{aBN-1} \lim_{K \to \infty} \frac{1}{K} \sum_{k'=0}^K  \aa_{k'aBN+k}
        \\
        &= \frac{1}{aBN} \sum_{k=0}^{aBN-1} p \aa_k^{[0]} + p\bar{p} \aa_k^{[1]} + p\bar{p}^2 \aa_k^{[2]} + \dots
    \end{aligned}
    \end{equation}
    Notice that this way, we circumvented the "impossibility" of the assumption for $\seq{\aa^{[l]}}$,
    since with probability of $p\bar{p}^l$ it is indeed the case that, relative to time step $t=kA'$, the last $l$ transmissions did fail and the $(l+1)$-th did succeed, for each of the $aBN$ elements. 
    According to Theorem \ref{th::second} and Lemma \ref{lemma::set_equation} we get
    \begin{gather}
    \begin{aligned}
        \frac{1}{aN} \sum_{ \substack{i=0,\dots a-1 \\ j = 0,\dots N-1 }}        
        c_{ij}^{[l]}
        =
        \frac{1}{aN} \sum_{ \substack{i=0,\dots a-1 \\ j = 0,\dots N-1 }}     
        2 + lN  + jan + \frac{ab \pm ab}{2} &&
        \\
        \ \ + (iA'-\Delta_N-1) \mod an &&
    \end{aligned} \notag
        \\
        =
        2 + lN + \frac{(N'-n)}{2}  + \frac{ab \pm ab}{2} + (\Delta_N + 1) \cmod n
    \end{gather}
    because $(iA'-\Delta_N-1) \mod an = (iA' \mod an -\Delta_N-1) \mod an$ which, over $i=0,\dots a-1$, generates the set $(\s{0,n,a} - -\Delta_N-1) \mod an$.
    Hence, with $K=2+ (\Delta_N + 1) \cmod n$,
    \begin{equation}
    \begin{aligned}
        \frac{1}{aBN}
        \sum_{k=0}^{aBN} \aa_k^{[l]}
        &=
        \frac{1}{aBN}
        \sum_{ \substack{i=0,\dots a-1 \\ j = 0,\dots N-1 }}        
        \s{c_{ij}^{[l]},ab,B}
        \\
        &= ab\frac{(B-1)}{2} + \frac{1}{aN} \sum_{ \substack{i=0,\dots a-1 \\ j = 0,\dots N-1 }} c_{ij}^{[l]}
        \\
        &= \frac{B'+N'-n}{2} + K 
        + lN  \pm \frac{ab}{2}
    \end{aligned}
    \end{equation}
    Substituting this back into \eqref{eq::expectation} and using the well-known results on polylogarithms yields
    \begin{equation}
    \begin{aligned}
        \E{\aa_k} &= \frac{B'+N'-n}{2} + K \pm \frac{ab}{2}
        + p \lim_{L \to \infty} \sum_{l=0}^L \bar{p}^l lN
        \\&=
        \frac{B'+N'-n}{2} + K \pm \frac{ab}{2}
        + \frac{\bar{p}}{p}N
    \end{aligned}
    \end{equation}
    This proves proposition \eqref{eq::coro_2_1}.
    
    As for proposition \eqref{eq::coro_2_2}, notice that (polylogarithm)
    \begin{equation}
        p\left( 1 + \bar{p} + \bar{p}^2 + \dots + \bar{p}^{\lambda -1} \right)
        =
        1- \bar{p}^\lambda
    \end{equation}
    I.e. considering only the first $\lambda$ summands in \eqref{eq::expectation} means considering only $1 - \bar{p}^\lambda$ percent of all possibly occurring values. The largest value among these belongs to the $\bar{p}^{\lambda-1}$ part and is easily evaluated to be
    \begin{equation}
        \max_{ij} c_{ij}^{[\lambda-1]} + (B-1) ab
        \leq
        B'+N' - n + K + (\lambda -1)N
    \end{equation}
    Setting $\sigma=1 - \bar{p}^{\lambda}$ yields proposition \eqref{eq::coro_2_2}.

    Finally, \eqref{eq::coro_2_3} follows for the same reasons as the corresponding proposition from Corollary \ref{coro::main}. \qed

\end{document}